\documentclass[conference]{IEEEtran}

%


%

%
\usepackage{color}
\usepackage{epsf}
\usepackage{times}
\usepackage{epsfig}
\usepackage{graphicx}
\usepackage{amsmath}
\usepackage{amssymb}
\usepackage{amsxtra}
\usepackage{here}
\usepackage{times}
\usepackage{url}
\usepackage{cite}
\usepackage[ruled,vlined]{algorithm2e}
\usepackage{subcaption}
\usepackage{epstopdf}
\usepackage{amsmath,amsthm}
\usepackage{dsfont}
\usepackage{verbatim} 
\usepackage{lettrine}
\usepackage{gensymb}


\newtheorem{theorem}{\bf Theorem}

\newtheorem{proposition}{\bf Proposition}
\newtheorem{lemma}{\bf Lemma}

\newtheorem{definition}{\bf Definition}

\topmargin=-0.75in


\makeatletter

\IEEEoverridecommandlockouts

\begin{document}
	
\title{\huge Drones in Distress: A Game-Theoretic Countermeasure for Protecting UAVs Against GPS Spoofing}
\author{\IEEEauthorblockN{AbdelRahman Eldosouky, Aidin Ferdowsi, and  Walid Saad}\vspace{-0.3cm}\\
	\IEEEauthorblockA{
		\small Wireless@VT, Bradley Department of Electrical and Computer Engineering, Virginia Tech, Blacksburg, VA, USA,\\Emails:\{iv727,aidin,walids\}@vt.edu\\
		\thanks{This research was supported by the US National Science Foundation under Grants OAC-1541105 and OAC-1638283.}
	}\vspace{-0.9cm}}

\maketitle\vspace{-0.8cm}
\begin{abstract}
One prominent security threat that targets unmanned aerial vehicles (UAVs) is the capture via GPS spoofing in which an attacker manipulates a UAV's global positioning system (GPS) signals in order to capture it. Given the anticipated widespread deployment of UAVs for various purposes, it is imperative to develop new security solutions against such attacks.  In this paper, a mathematical framework is introduced for analyzing and mitigating the effects of GPS spoofing attacks on UAVs. In particular, system dynamics are used to model the optimal routes that the UAVs will adopt to reach their destinations. The GPS spoofer's effect on each UAV's route is also captured by the model. To this end, the spoofer's optimal imposed locations on the UAVs, are analytically derived; allowing the UAVs to predict their traveling routes under attack. Then, a countermeasure mechanism is developed to mitigate the effect of the GPS spoofing attack. The countermeasure is built on the premise of cooperative localization, in which a UAV can determine its location using nearby UAVs instead of the possibly compromised GPS locations. To better utilize the proposed defense mechanism, a dynamic Stackelberg game is formulated to model the interactions between a GPS spoofer and a drone operator. In particular, the drone operator acts as the leader that determines its optimal strategy in light of the spoofer's expected response strategy. The equilibrium strategies of the game are then analytically characterized and studied through a novel proposed algorithm. Simulation results show that, when combined with the Stackelberg strategies, the proposed defense mechanism will outperform baseline strategy selection techniques in terms of reducing the possibility of UAV capture.
\end{abstract} 
\vspace{0.2cm}

\begin{IEEEkeywords}
	Unmanned aerial vehicles (UAVs), GPS spoofing,  game theory,  dynamic Stackelberg equilibrium, cooperative localization
\end{IEEEkeywords}

\section{Introduction}\label{Sec:Intro}

Unmanned aerial vehicles (UAVs), popularly known as drones, have been recently adopted in many Internet of Things (IoT) systems to provide services such as telecommunications, delivery, surveillance, and medical services~\cite{mozaffari2016unmanned,rahmati2019dynamic, zhang2018downlink, hodgson2016precision,   eldosouky2017resilient,chen2017caching,mozaffari2017mobile}. Due to their ability to hover and their high-mobility capability without being restricted to specific routes, UAVs can provide services in hard-to-reach locations such as natural disaster sites. Considering their ease of deployment, UAVs can play a major rule in time-critical systems~\cite{french2018environment} and to provide urgent Internet and communication services when necessary~\cite{mozaffari2019beyond} and~\cite{amer2018caching}.

However, the widespread use of UAVs in different applications exposes them to a plethora of security threats that include cyber, physical, and cyber-physical attacks~\cite{altawy2017security}. Examples include cyber attacks such as, false data injection~\cite{challita2019machine} and physical attacks such as targeting the UAVs using firearms or hunting rifles. Cyber-physical attacks, on the other hand, represent a category of sophisticated attacks that aim at causing both cyber and physical damage to the UAV such as GPS spoofing, GPS jamming, hijacking the connection between a UAV and its controller, and thwarting delivery drones~\cite{sanjab2019game}.

Among these attacks, GPS spoofing is seen as one of the most imminent threats as it is practical and can be easily performed against UAVs~\cite{shepard2012evaluation}. In GPS spoofing attacks, an attacker transmits fake GPS signals to a UAV with slightly higher power than the authentic GPS signals, so as to mislead the UAV into thinking it is in another location. Hence, the attacker can use this technique to send the UAV to another, predetermined, location where it can be captured, thus executing a capture via GPS spoofing attack~\cite{kerns2014unmanned}. The authors in~\cite{kerns2014unmanned} discussed two types of GPS spoofing attacks known as covert and overt attacks. In a covert attack, the attacker wants to avoid triggering some spoofing detection techniques such as jamming-to-noise ratio and frequency unlock monitoring within the GPS receiver. This requires the attacker to be capable of accurately monitoring the target UAV and to transmit its spoofing signals with specific powers and frequencies. The attacker may also be forced to limit the changes it can impose on a UAV. In contrast, in an overt attack, the attacker can impose any location on the UAV with the risk of being detected.

\subsection{Related Works}

Different techniques have been proposed in literature to defend against GPS spoofing attacks with a focus on attack detection~\cite{warner2003gps,iqbal2008legal,o2012real,schmidt2016survey,jansen2018crowd}. In~\cite{warner2003gps}, different techniques are discussed that can enable a UAV's receiver to detect the spoofing attacks. This includes allowing the receiver to observe the received signal strength and compare it to the expected signal strength over time. It can also monitor the identification codes of GPS satellites or keep checking the time intervals to see if they are constant. While these techniques can help to detect basic attackers, they fail against sophisticated attacks in which the attacker monitors the target object accurately~\cite{iqbal2008legal}. In~\cite{o2012real}, the authors proposed a method to detect GPS spoofing attacks by using two GPS receivers and checking their cross-correlations. This method was tested against several spoofing attacks and was shown to successfully detect attacks, however, it has low probability of differentiating spoofed from authentic GPS signals and cannot detect the spoofing when the signals are weak.

Other techniques to thwart GPS spoofing such as receiver autonomous integrity monitoring, signal to interference ratio, Doppler shift detection are discussed in~\cite{schmidt2016survey}. However, all of these techniques can be avoided by highly capable adversaries that can carefully generate GPS counterfeit signals to avoid triggering these detection schemes. In~\cite{akos2012s}, automatic gain control is used within the GPS receiver to detect and flag potential spoofing attacks within a low computational complexity framework. Finally, the work in~\cite{jansen2018crowd} proposed a technique that allows UAVs to detect GPS spoofers by using an independent ground infrastructure that continuously analyzes the contents and times of arrival of the estimated UAV positions. The proposed technique was shown to be effective in detecting the spoofing attacks in less than two seconds and to determine the spoofer's location after $15$ minutes of monitoring time, with an accuracy of up $150$ meters.

Other works in literature have studied the use of multiple receivers to detect GPS spoofing attacks such as in~\cite{montgomery2011receiver,jansen2016multi,heng2015gps}. In~\cite{montgomery2011receiver}, the authors demonstrated the ability of using a dual antenna receiver in detecting GPS spoofing attacks. Their proposed technique depends on observing the carrier differences between the different antennas referenced to the same oscillator. Under the proposed configuration, an attacker will need to use an additional transmitting antenna for every additional receiver antenna which complicates the attacker's mission. In~\cite{jansen2016multi}, multiple independent GPS receivers were used to detect GPS spoofing attacks. The proposed technique depends on fixing the distances between the receivers and then measuring the distances between the receivers' reported locations. Under authentic GPS signals, the measured distances will be similar to the previously fixed distances. However, under a GPS spoofing attack, the measured distances will be very close to zero, as all the receivers are spoofed with the same fake location. Finally, in~\cite{heng2015gps}, multiple receivers were used to authenticate the GPS signals based on the correlation with the military GPS signal, without the need to decrypt it. Among these receivers, one GPS receiver uses the other receivers, referred to as cross-check receivers, to determine whether its GPS signals are authentic. The proposed technique was shown to be effective even when the cross-check receivers are spoofed with some probability. The technique was tested with stationary and moving GPS receivers and was shown to effectively detect the spoofing attacks.

However, one limitation of these existing GPS spoofing detection techniques, i.e.~\cite{warner2003gps,iqbal2008legal,o2012real,schmidt2016survey,akos2012s,jansen2018crowd,montgomery2011receiver,jansen2016multi,heng2015gps}, is that they do not provide an approach to determine the real location of the UAV, after detecting the attack. Thus, if a UAV is attacked while following a route towards a specific destination, the best it can do is to recognize the attack and to stop using the altered GPS signals. However, the UAV will not be able to determine its real location, and, hence, it will not resume its motion towards the specified destination. Indeed, these prior works are mostly focused on detection techniques and do not provide any attack mitigation or defense mechanisms (beyond discarding GPS signals altogether).

\subsection{Contributions}
The main contribution of this paper is, thus, a general framework for UAVs to mitigate the effect of capture attacks via GPS spoofing.
Unlike the prior works~\cite{warner2003gps,iqbal2008legal,o2012real,schmidt2016survey,akos2012s,jansen2018crowd}, our framework can both allow the UAVs to detect the GPS spoofing attacks and to determine their real locations. This will enable the UAVs to avoid being captured and to resume their previous routes and fulfill their missions.

In the proposed framework, we use system dynamics~\cite{ferdowsi2018robust} to model a UAV's motion between its origin and destination. For this model, we derive the optimal UAV's controller that allows  UAV to travel on the shortest path between any given two locations. This model also captures the effect of a GPS spoofer's on the UAV's traveling path. In particular, we analytically derive the optimal location, under covert attack, that a GPS spoofer can impose on a UAV to lead it to the attacker's desired destination where it can be physically captured. We, then, introduce a defense mechanism built on the technique of cooperative localization~\cite{qu2011cooperative}, which enables a UAV, traveling within a group of proximate UAVs, to determine its location using the real locations of neighboring UAVs and their relative distances. The mechanism also allows the identification of which UAV is being attacked.

Subsequently, we model the interactions between a GPS spoofer and a group of UAVs using game theory~\cite{eldosouky2016single}. In particular, we formulate a dynamic Stackelberg game in which the drone operator is the leader that selects its strategy first, and then, the spoofer responds by selecting its strategy. A strategy, here, represents a set of actions taken over all the time steps. We, then, propose an efficient technique to solve the formulated dynamic Stackelberg game. Using this technique, we analytically derive the Stackelberg strategies for the game. Finally, through simulations, we show that drone operator can effectively use the proposed defense mechanism to protect the UAVs from being captured and minimize the attacker's effect on the UAVs' optimal routes.

In summary, our contributions include
\begin{itemize}
	\item We propose a general framework, using realistic system dynamics, to model a UAV's traveling path between any two locations. This model takes into account the effect of a possible GPS spoofer on the UAV's traveling path.
	\item  We analytically derive the attacker's optimal imposed location, on a UAV. This imposed location ensures that attack remains covert while maximizing the attacker's benefit from imposing a different location on the UAV.
	\item We propose a new defense mechanism built on the technique of cooperative localization to help UAVs to determine their real location, under GPS spoofing attack, using neighbor UAVs' real locations and their relevant distances.
	\item We then formulate a dynamic Stackelberg game to model the interaction between the drone operator and the GPS spoofer. This game formulation allows the drone operator to effectively use the proposed defense mechanism.
	\item We introduce a novel computationally-efficient approach to solve the formulated game and we analytically derive the Stackelberg equilibrium strategies for the game.
	\item We show, through simulations, that the derived Stackelberg strategies outperform other strategy selection techniques by reducing the UAVs possibility of being captured. The defense mechanism is also shown to mitigate the effects of GPS spoofing attacks on the UAVs' deflections from their planned routes.
	\end{itemize}

The rest of this paper is organized as follows. The UAV's system dynamics model is presented in Section~\ref{Sec:Model}. The attacker's model and the optimal imposed locations are derived in Section~\ref{Sec:Attack}. The proposed defense mechanism  and the Stackelberg dynamic game with its equilibrium solutions are formulated in Section~\ref{defenseMech}. Numerical results are presented and analyzed in Section~\ref{Sec:results}. Finally, conclusions are drawn in Section~\ref{Sec:conclusion}.

\section{System Model}\label{Sec:Model}
Consider a set $\mathcal{N}$ of $N$ UAVs performing a common mission, e.g., a drone delivery system responsible for delivering goods within a certain geographic area. Each UAV is typically equipped with a GPS receiver, a means of wireless communication, and other application-specific sensors. As it was shown in~\cite{su2016stealthy}, a GPS spoofing attack cannot affect the altitude of UAVs and, thus, we use a two-dimensional (2D) coordination system to specify their locations. Let the location of UAV $i$ at time $t$ be $\boldsymbol{x}_i(t) = \left[x_i(t), y_i(t)\right]^T$, where $i \in \mathcal{N}$. Similarly, the source locations, $O_i$, will be given as $\boldsymbol{x}_{O_i} = \left[x_{O_i}, y_{O_i}\right]^T$, and each destination's location is $\boldsymbol{x}_{d_i} = \left[x_{d_i}, y_{d_i}\right]^T$. Destinations are assumed to be fixed and not time dependent. The goal of each UAV is to minimize the transportation cost and, hence, it chooses the shortest path from its source $O_i$ towards its destination $d_i$. 


In our model, we consider an adversary located along the traveling paths of the UAVs whose goal is to spoof the GPS signals of any of the UAVs in order to send it to another location where it can be captured. We consider a capable GPS spoofer that can spoof from a distance (in the order of hundreds of meters) without the need to be co-located with the UAV's GPS receiver~\cite{kerns2014unmanned}. 


Prior to developing the threat model, we first use system dynamics to model the UAV's motion between its source location, the origin, and destination. This model is needed to better understand the impact of the attack on the UAV's mobility. In order for each UAV to minimize its travel time, each UAV will follow the shortest path between its current location and its destination which essentially consists of the straight line connecting the two locations in 2D space. Let the location of UAV $i$ after a time duration $\Delta$ be $\boldsymbol{x}_i(t+\Delta)$. Let $\boldsymbol{v}_i(t) = \left[v_{x_i}(t), v_{y_i}(t)\right]^T$ be the UAV's velocity at the beginning of time step $t$. The UAV's velocity at the end of the time step can be represented in terms of the UAV's acceleration as follows:
\begin{equation}\label{eq:nextVelocity}
\boldsymbol{v}_i(t+\Delta) = \boldsymbol{v}_i(t) + \Delta \cdot \boldsymbol{g}_i(t),
\end{equation}
where $\boldsymbol{g}_i(t)=\left[g_{x_i}(t), g_{y_i}(t)\right]^T$ is the acceleration of UAV $i$ during the time step starting at $t$ with a duration $T$.

The next location can then be represented using both the velocity and acceleration as follows:
\begin{equation}\label{eq:nextLocationAcceleration}
\boldsymbol{x}_i(t+\Delta) = \boldsymbol{x}_i(t) + \Delta \cdot \boldsymbol{v}_i(t) + \frac{\Delta^2}{2} \cdot \boldsymbol{g}_i(t).
\end{equation}

Since the force needed to move the UAV between two locations is proportional to both the UAV's acceleration and weight, and the UAV's weight is constant, this force can then be related directly to the acceleration~\cite{ferdowsi2018cyber}. Let $\boldsymbol{u}_i(t)=\left[u_{x_i}(t),u_{y_i}(t)\right]^T$ be the force needed to move the UAV between any two locations. This force will be proportional to the distance difference between the current and the next locations, i.e., the UAV must accelerate more in the direction with a larger distance difference. Fig.~\ref{UAV_travel} shows the UAV's traveling model with the force components in each direction. In Fig.~\ref{UAV_travel}, the distance difference in the $x$ direction is more than the difference in the $y$ direction, and, hence, the force component in the $x$ direction will be greater than that of the $y$ direction.

\begin{figure}[t]
	\centerline{\includegraphics[width=7cm]{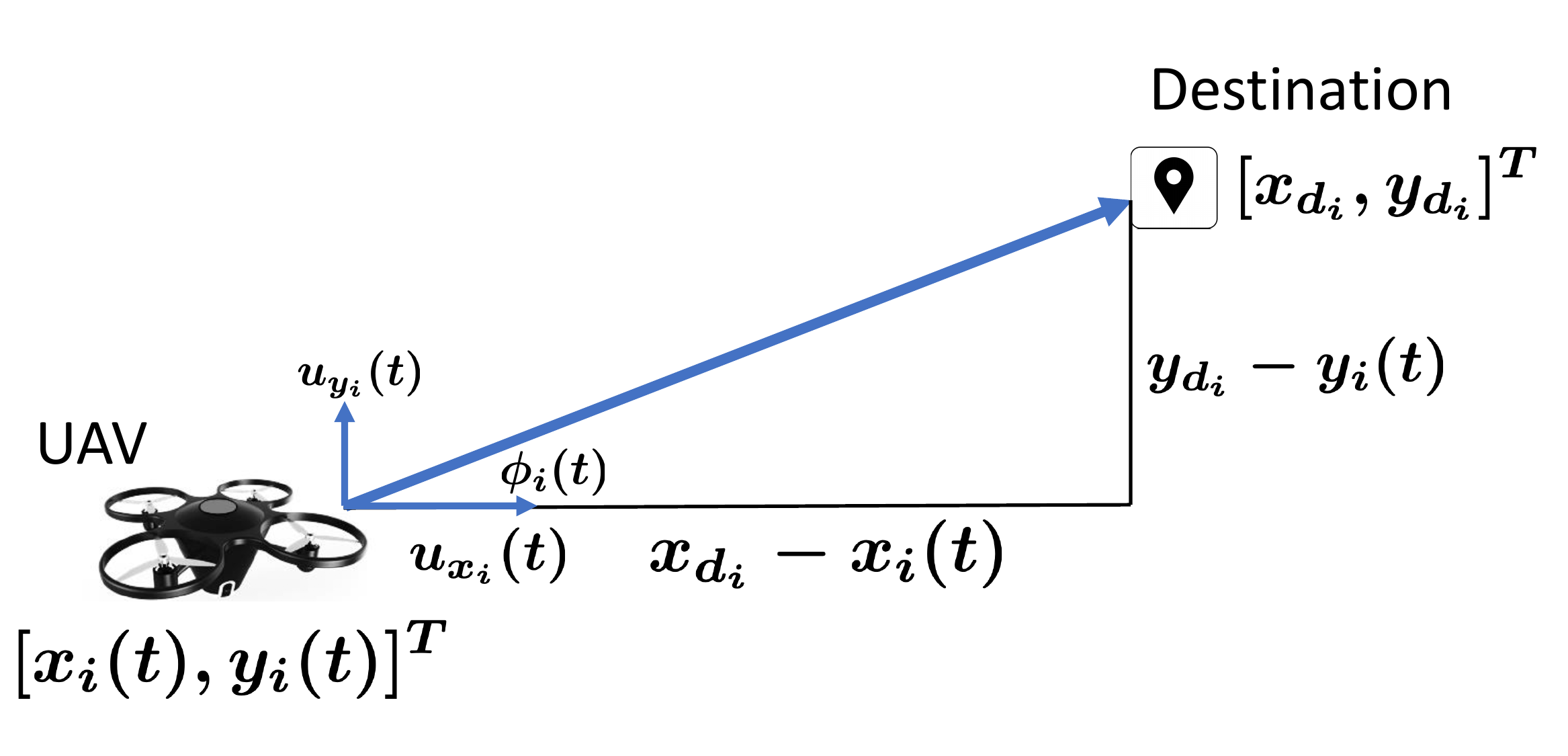}}
	\caption{UAV traveling model between two locations.}\label{UAV_travel}
	\vspace{-0.45cm}
\end{figure}


In order for each UAV to minimize its travel time, each UAV will need to find the optimal force to move between any two locations given that the maximum allowable force is $u^{\textrm{max}}$. Let $\phi_i$ be the angle between the UAV movement route and the positive $x$ direction which can be calculated as:
\begin{equation}\label{eq:phi}
\phi_i(t) =  \arctan\Big(\frac{y_{d_i} - y_i(t)}{x_{d_i} - x_i(t)}\Big)=\arctan(\gamma).
\end{equation}

The force components in both $x$ and $y$ directions can then be given by:
\begin{align}\label{eq:forcecomponents}
u_{x_i}^*(t) &= u^{\textrm{max}} \cdot \cos(\phi_i(t)), \nonumber \\
u_{y_i}^*(t) &= u^{\textrm{max}} \cdot \sin(\phi_i(t)) .
\end{align}

These values represent the optimal controller, i.e., the optimal force that each UAV can use to move between any two locations. Note that, if $\gamma >1$ then $\sin(\phi_i(t))>\cos(\phi_i(t))$ and in this case $u_{y_i}^*(t) > u_{x_i}^*(t)$, and vice verse. Substitute the optimal controller into (\ref{eq:nextLocationAcceleration}), the UAV's next location can then be given, in terms of the optimal controller, as:
\begin{align}\label{eq:optimalNextLocation}
x_i(t+\Delta) &= x_i(t) + \Delta \cdot v_{x_i}(t) + c \cdot \Delta^2 \cdot u_{x_i}^*(t), \nonumber\\
y_i(t+\Delta) &= y_i(t) + \Delta \cdot v_{y_i}(t) + c \cdot \Delta^2 \cdot u_{y_i}^*(t),
\end{align}
where $c = \frac{m}{2}$ is a constant and $m$ is the UAV's weight. Next we will discuss the effect of a GPS spoofer on the UAV's route by deriving both the optimal locations for an attacker to impose on a UAV and the manipulated routes under attack.

\section{UAV Traveling  Model under GPS Spoofing Attack}\label{Sec:Attack}


In our model, the GPS spoofer seeks to take control of the UAV's GPS antennas and then transmit tailored GPS signals to convince the UAV's navigation system that it is in a different location. The spoofer can perform either an overt attack or a covert attack.
In the overt attack, the spoofer makes no effort to hide its attack, it transmits its fake signals with higher power than the authentic GPS signals. The covert attack, on the other hand, requires an accurate tracking of the target UAV and the transmission of fake GPS signals with specific power requirements to avoid being immediately detected by the UAV.
Here, we consider a spoofer that wants to keep its attack covert by adjusting the transmission power of the counterfeit GPS signals to avoid being detected. Practical values for such power requirements can be found in~\cite{kerns2014unmanned}. In addition, the attacker will be limited to the changes it can impose on the UAV's location, each time, so that these imposed locations do not trigger the fault detectors within the UAV~\cite{su2016stealthy}.  The distance between the current and imposed location is known as the instance drifted distance~\cite{zeng2017practical}.

Let $e_{\textrm{max}}$ be the instance drifted distance that limits the spoofer's attack. Let $\boldsymbol{\hat{x}}_i(t) = \left[\hat{x}_i(t), \hat{y}_i(t)\right]^T$ be the attacker's imposed location on UAV $i$. Let $\boldsymbol{E}_i(t)= \left[e_{x_i}(t), e_{y_i}(t)\right]^T$ be a vector whose individual elements represent the distance difference
between the UAV's actual location and the attacker's imposed location. Then, we must have:
\begin{equation}\label{eq:normDistance}
\left\lVert \boldsymbol{E}_i(t) \right \rVert_2 = \left\lVert  \boldsymbol{x}_i(t) - \boldsymbol{\hat{x}}_i(t) \right \rVert_2 \le e_{\textrm{max}},
\end{equation}
which represents a circle of radius $e_{\textrm{max}}$ around the UAV's current location.





Note that, imposing an attacker-desired location on a UAV does not actually change the UAV's location, instead, it changes the UAV's belief about its location. This means that the UAV will still be in its real location but its navigation system will believe that it is in a different location. The UAV will then need to find a new optimal controller, i.e., new force components to move from its imposed location to its final destination. In this case, there will be two routes as shown in Fig.~\ref{fake_route}. Here, the upper route is the fake route which the UAV believes it is traveling on. This route starts from the attacker's imposed location towards the UAV's real destination. However, the UAV will actually travel on the lower path towards the attacker's desired destination. 

Let $\boldsymbol{x}^a_{d_i}=[x^a_{d_i}, y^a_{d_i}]^T$ be the attacker's desired destination for UAV $i$. The attacker's imposed location, $\boldsymbol{\hat{x}}_i(t)$, at each time step, needs to be calculated in order for the UAV to move towards $\boldsymbol{x}^a_{d_i}$. This can be achieved by satisfying the condition in the following lemma.

\begin{figure}[t]
	\centerline{\includegraphics[width=7cm]{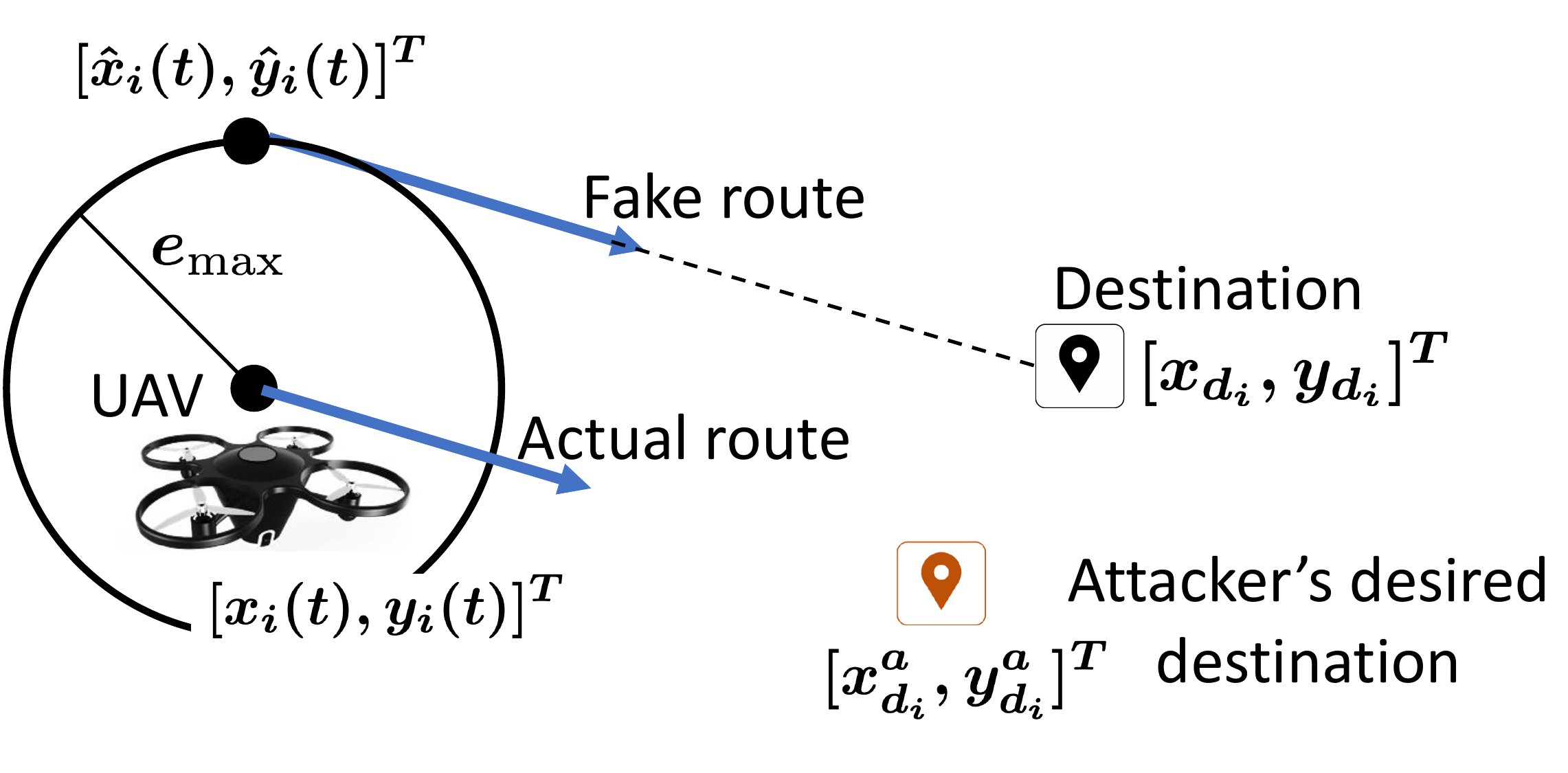}}
	\caption{UAV actual and fake routes.}\label{fake_route}
\end{figure}

\begin{lemma}\label{Lemma1}
	 The attacker's imposed location needs to satisfy $\hat{\gamma} = \gamma^a$, where $\hat{\gamma}=\Big( \frac{y_{d_i} - \hat{y}_i(t) }{x_{d_i} - \hat{x}_i(t)}\Big)$ and $\gamma^a = \Big( \frac{y^a_{d_i} - y_i(t) }{x^a_{d_i} - x_i(t)}\Big) $. 
\end{lemma}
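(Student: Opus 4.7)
The plan is to exploit the decoupling between the UAV's \emph{belief} about its location, which drives the onboard controller, and its \emph{actual} physical location, at which the chosen force is applied. Under a GPS spoofing attack the UAV's navigation system takes $\boldsymbol{\hat{x}}_i(t)$ to be its current position, so when it invokes the optimal controller derived in (\ref{eq:phi})--(\ref{eq:forcecomponents}) it computes the steering angle using the believed location rather than the true one, giving $\hat{\phi}_i(t)=\arctan(\hat{\gamma})$. The resulting force vector $\bigl[u^{\textrm{max}}\cos\hat{\phi}_i(t),\,u^{\textrm{max}}\sin\hat{\phi}_i(t)\bigr]^T$, however, is physically exerted on the UAV while it is still at $\boldsymbol{x}_i(t)$, so the real acceleration direction in (\ref{eq:optimalNextLocation}) is $\hat{\phi}_i(t)$ applied at the true position.

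Next I would set up the attacker's requirement. For the physical trajectory to be steered toward the attacker's desired destination $\boldsymbol{x}^a_{d_i}$, the direction of the applied force, as seen from the true position $\boldsymbol{x}_i(t)$, must coincide with the direction of the line segment from $\boldsymbol{x}_i(t)$ to $\boldsymbol{x}^a_{d_i}$. By the same geometric construction that produced (\ref{eq:phi}), this target direction has angle
\begin{equation*}
\phi^a_i(t)=\arctan\!\left(\frac{y^a_{d_i}-y_i(t)}{x^a_{d_i}-x_i(t)}\right)=\arctan(\gamma^a).
\end{equation*}
Equating the controller's angle with the geometrically required angle yields $\arctan(\hat{\gamma})=\arctan(\gamma^a)$, and since $\arctan$ is injective on each branch (with a consistent quadrant fixed by the sign of the displacement components in the $x$ and $y$ directions), this collapses to $\hat{\gamma}=\gamma^a$, which is the claim.

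The final step in the plan is to note what this equality does \emph{not} pin down: it characterizes a one-parameter family of imposed locations lying along a specific ray through $\boldsymbol{x}_{d_i}$, not a unique point. The instance-drifted-distance constraint (\ref{eq:normDistance}) then carves out the feasible portion of that ray inside the disk of radius $e_{\textrm{max}}$ around $\boldsymbol{x}_i(t)$; selecting the optimal point within that feasible set is a separate optimization that Lemma~\ref{Lemma1} does not address.

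The only real obstacle I anticipate is bookkeeping for the arctangent branch: if $\hat{\gamma}=\gamma^a$ were interpreted purely as equality of ratios, it would also admit the antipodal direction, which would send the UAV \emph{away} from $\boldsymbol{x}^a_{d_i}$. I would handle this by insisting that the sign of $x_{d_i}-\hat{x}_i(t)$ match the sign of $x^a_{d_i}-x_i(t)$ (and likewise for the $y$ components), so that $\hat{\phi}_i(t)$ and $\phi^a_i(t)$ lie in the same quadrant. With this quadrant convention made explicit, the argument reduces to the one-line angle-matching calculation above.
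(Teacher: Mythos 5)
Your argument is correct and follows essentially the same route as the paper's proof: the paper phrases the condition as the actual route (from $\boldsymbol{x}_i(t)$ to $\boldsymbol{x}^a_{d_i}$) being parallel to the believed route (from $\boldsymbol{\hat{x}}_i(t)$ to $\boldsymbol{x}_{d_i}$) and equates slopes, which is exactly your angle-matching of the controller direction with the required physical direction. Your explicit handling of the $\arctan$ branch (requiring the displacement components to have matching signs so the condition selects the forward rather than the antipodal direction) is a worthwhile refinement that the paper's parallelism argument leaves implicit.
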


\begin{proof} 
	Under an attack, the UAV will believe it is traveling from the attacker's imposed location, on the fake route in Fig.~\ref{fake_route}. The attacker should then select this imposed location $[\hat{x}_i(t),\hat{y}_i(t)]$ such that the actual route leads the UAV to the attacker's desired destination. Since, the UAV travels on the shortest path between any two locations, the actual route will represent a straight line that is parallel to the route which the UAV believes it is traveling on, i.e., the fake route.
	
	From Fig.~\ref{fake_route}, the fake route, can be defined by the two points $[\hat{x}_i(t),\hat{y}_i(t)]$ and $[x_{d_i},y_{d_i}]$. Similarly, the actual route, can be defined by the two points $[x_i(t),y_i(t)]$ and $[x^a_{d_i},y^a_{d_i}]$. For these routes to be parallel, the slopes of both routes need to be equal, i.e., $\hat{\gamma} = \gamma^a$.
\end{proof} 

Note that, under overt attack, according to Lemma~\ref{Lemma1}, the attacker can impose, theoretically, any location on the UAV that will lead the UAV to follow a path towards the attacker's desired destination. However, under a covert attack, the imposed location will be limited by (\ref{eq:normDistance}). This imposes constraints on the attacker when choosing the imposed location as there may be multiple or no points inside the circle, in \eqref{eq:normDistance}, that satisfy Lemma~\ref{Lemma1}.  When no such points exist, the best option for the attacker is to force the UAV to move in a direction as close as possible to the line connecting the real location and the attacker's desired destination, i.e., a direction that minimizes the difference $|\hat{\gamma} - \gamma^a|$ in Lemma~\ref{Lemma1}. If there are more than one point that satisfy Lemma~\ref{Lemma1}, then the best for the attacker is to choose the furthest point from the UAV's real destination as this gives more flexibility for the attacker in changing the imposed locations in the future time steps. Thus the attacker's optimal imposed location can be given by the solution of the following constrained-optimization problem.
\begin{align}\label{eq:imposedLocation}
& \mathop {\min}_{\boldsymbol{x}^a_i(t)}  |\hat{\gamma} - \gamma^a|  , \\ 
\text{s. t.}  \hspace{0.7cm}& {x}^a_i(t) =  \operatorname*{argmax}_{\boldsymbol{x}^a_i(t)}    \left\lVert  \boldsymbol{x}_{d_i} - \boldsymbol{x}^a_i(t) \right \rVert_2, \nonumber \\ 
& \left\lVert \boldsymbol{E}_i(t) \right \rVert_2 = \left\lVert  \boldsymbol{x}_i(t) - \boldsymbol{x}^a_i(t) \right \rVert_2 \le e_{\textrm{max}}.	\nonumber 
\end{align}	

In the following theorem, we analytically derive the attacker's imposed location, under covert attack.

\begin{theorem}\label{Theorem1}
	Let $ s_i(t)\triangleq (d_i(t)+a_i(t)+l_i)(d_i(t)+a_i(t)-l_i)(d_i(t)-a_i(t)+l_i)(a_i(t)+l_i-d_i(t)) $ where 
	\begin{align}
	d_i(t) &\triangleq \sqrt{(x_i(t)-x_{d_i})^2+(y_i(t)-y_{d_i})^2},\\
	a_i(t) &\triangleq \sqrt{(x_i(t)-x^a_{d_i})^2+(y_i(t)-y^a_{d_i})^2},\\
	l_i &\triangleq \sqrt{(x^a_{d_i}-x_{d_i})^2+(y^a_{d_i}-y_{d_i})^2}.
	\end{align}
	Then, the attacker's imposed location is the solution for the following set of equations:
	\begin{align}
	\hat{y}_i(t) - y_{d_i} = \frac{\hat{y}_i(t)-y_{d_i}}{\hat{x}_i(t)-x_{d_i}}(\hat{x}_i(t)-x_{d_i})\label{eq:LP}\\
	\left(x_i(t)-\hat{x}_i(t)\right)^2 + \left(y_i(t)-\hat{y}_i(t)\right)^2 = e^2_{\textrm{max}},\label{eq:Circle}
	\end{align}
	if $e_{\textrm{max}}> \frac{1}{2a_i(t)} \sqrt{s_i(t)} $, or the following set of equations: 
	\begin{align}
	\left(\hat{x}_i(t)-\hat{x}_{d_i}(t)\right)^2 + \left(\hat{x}_i(t)-\hat{y}_{d_i}(t)\right)^2 &= e^2_{\textrm{max}} + d^2(t)\label{eq:tanget}\\
	\left(x_i(t)-\hat{x}_i(t)\right)^2 + \left(y_i(t)-\hat{y}_i(t)\right)^2 &= e^2_{\textrm{max}}
	\end{align}
	if $e_{\textrm{max}}\leq \frac{1}{2a_i(t)} \sqrt{s_i(t)} $.
\end{theorem}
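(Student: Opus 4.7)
My strategy is to translate~\eqref{eq:imposedLocation} into a purely planar-geometry problem: by Lemma~\ref{Lemma1} the slope equality $\hat\gamma=\gamma^a$ forces $\hat{\boldsymbol{x}}_i(t)$ onto the line $L^{\star}$ through $\boldsymbol{x}_{d_i}$ parallel to $\boldsymbol{x}^a_{d_i}-\boldsymbol{x}_i(t)$, while the covert-attack bound~\eqref{eq:normDistance} confines $\hat{\boldsymbol{x}}_i(t)$ to the closed disk $D$ of radius $e_{\max}$ centred at $\boldsymbol{x}_i(t)$.  The entire argument then hinges on whether $L^{\star}$ meets $D$.  To see where the threshold in the statement comes from I would apply Heron's formula to the triangle $T$ with vertices $\boldsymbol{x}_i(t),\boldsymbol{x}_{d_i},\boldsymbol{x}^a_{d_i}$, whose side lengths are exactly $d_i(t),a_i(t),l_i$: this identifies $\tfrac{1}{4}\sqrt{s_i(t)}$ with $\operatorname{Area}(T)$.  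Since $L^{\star}$ is a translate of the side of length $a_i(t)$, the perpendicular distance from $\boldsymbol{x}_i(t)$ to $L^{\star}$ equals the altitude from $\boldsymbol{x}_{d_i}$ onto that side, namely $2\operatorname{Area}(T)/a_i(t)=\sqrt{s_i(t)}/(2a_i(t))$.  Hence $L^{\star}\cap D\neq\emptyset$ iff $e_{\max}\ge\sqrt{s_i(t)}/(2a_i(t))$, which is exactly the case split in the theorem.

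In Case~1, the chord $L^{\star}\cap D$ is non-degenerate and every point on it already achieves $|\hat\gamma-\gamma^a|=0$; so the secondary tie-breaker in~\eqref{eq:imposedLocation} (farthest from $\boldsymbol{x}_{d_i}$) picks the endpoint of the chord on $\partial D$ that is farther from $\boldsymbol{x}_{d_i}$.  That endpoint is characterised as the simultaneous solution of~\eqref{eq:LP} (the point-slope form of $L^{\star}$ through $\boldsymbol{x}_{d_i}$, once one reads past the evident self-referential typo, so that its right-hand slope is really $\gamma^a$) and~\eqref{eq:Circle} (the boundary of $D$), retaining the root of larger distance to $\boldsymbol{x}_{d_i}$.

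In Case~2, $D$ and $L^{\star}$ are disjoint, so the primary objective cannot vanish and by compactness its minimiser must lie on $\partial D$.  Viewing $\hat\gamma$ as the slope of the line from $\boldsymbol{x}_{d_i}$ through $\hat{\boldsymbol{x}}_i(t)$, as $\hat{\boldsymbol{x}}_i(t)$ varies over $D$ this slope sweeps exactly the closed angular cone subtended by $D$ at $\boldsymbol{x}_{d_i}$, whose extreme directions are the two tangent lines from $\boldsymbol{x}_{d_i}$ to $\partial D$.  Because $\gamma^a$ sits outside this cone (that is the content of Case~2), the closest attainable slope is realised at one of the two tangent points, where the radius $\hat{\boldsymbol{x}}_i(t)-\boldsymbol{x}_i(t)$ is perpendicular to the segment $\hat{\boldsymbol{x}}_i(t)\boldsymbol{x}_{d_i}$.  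Pythagoras on the right triangle with legs $e_{\max}$ and $\lVert\hat{\boldsymbol{x}}_i(t)-\boldsymbol{x}_{d_i}\rVert$ and hypotenuse $d_i(t)$ then yields $\lVert\hat{\boldsymbol{x}}_i(t)-\boldsymbol{x}_{d_i}\rVert^{2}=d_i^{2}(t)-e_{\max}^{2}$, which together with the circle condition is equivalent to the pair~\eqref{eq:tanget} (after correcting an evident sign/notation typo in the written form).  The main obstacle is this last step: rigorously arguing that the two tangent points are the only candidate minimisers of $|\hat\gamma-\gamma^a|$ on $D$, ruling out interior critical points, and then using the secondary maximisation to disambiguate which of the two tangent points is selected when, as here, both are equidistant from $\boldsymbol{x}_{d_i}$ and the tie is in fact resolved by proximity to $\gamma^a$.
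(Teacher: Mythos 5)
Your argument mirrors the paper's proof essentially step for step: Heron's formula on the triangle with side lengths $d_i(t)$, $a_i(t)$, $l_i$ to identify the threshold distance $\sqrt{s_i(t)}/(2a_i(t))$, the intersection of the parallel line $L_p$ with the constraint circle plus the farther-from-$\boldsymbol{x}_{d_i}$ tie-break in Case 1, and the tangent points from $\boldsymbol{x}_{d_i}$ to the circle in Case 2. The typos you flag are genuinely present in the paper (the self-referential slope in \eqref{eq:LP} and the sign in \eqref{eq:tanget}, which your right-triangle argument correctly shows should read $d_i^2(t)-e_{\textrm{max}}^2$), and the residual gap you acknowledge in Case 2 --- ruling out non-tangent candidates and disambiguating the two tangent points --- is left equally implicit in the paper's own proof.
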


\begin{proof} 
	In Fig.~\ref{imposedlocation}, we use a geometrical representation for the problem to help clarify our proof. Let $L_a$ be the line connecting the UAV's real location to the attacker's desired destination. This line represents the attacker's ideal route for the UAV to travel on. Let $L_p$ be the line parallel to $L_a$ and passes through the UAV's real destination and $ \varepsilon $ be the distance between these two lines. There are then two cases for line $L_p$.
	
	\emph{Case 1:}
	if the line $L_p$ touches or intersects with the circle, formed by the constraint, then the point or the set of points of the intersection will represent a solution for the first objective function. In this case, the difference $|\gamma^a - \hat{\gamma}|$ will be $0$, which is the minimum possible value. This case will happen if $e_{\textrm{max}}\ge \varepsilon $. Thus, next, we find the value of  $ \varepsilon $ using the known values of $ d_i(t) $, $ a_i(t) $, and $ l_i $. To this end, we find the area of triangle $ ADX $, $ s_{ADX}(t) $, using two ways: 1) $ s_{ADX}(t) = \frac{\varepsilon \cdot a_i(t)}{2} $ and 2) $ s_{ADX}(t) = \sqrt{s_i(t)}/4 $, using Heron's formula ~\cite{coxeter1969introduction}. Hence, we will have $ \varepsilon = \frac{1}{2a_i(t)} \sqrt{s_i(t)} $. Therefore, if $ e_{\textrm{max}} \ge \frac{1}{2a_i(t)} \sqrt{s_i(t)} $, then the attacker's optimal choice is the intersection of $ L_p $ with circle $ C $ where $ L_p $ can be given by \eqref{eq:LP} and $ C $ can be given by \eqref{eq:Circle}.
	
	As this intersection may consist of more than one point, let $\mathcal{S}$ represents the solution set so far. The optimal solution for the problem in (\ref{eq:imposedLocation}) can then be found by solving the second optimization problem in the first constraint, i.e.,  ${x}^a_i(t) =  \operatorname*{argmax}_{\boldsymbol{x}^a_i(t)}    \left\lVert  \boldsymbol{x}_{d_i} - \boldsymbol{x}^a_i(t) \right \rVert_2$~\cite{sinha2018review}. If $\mathcal{S}$ has only one point, then this point will be the solution to the first constraint, which is a point on the circle perimeter. However, if $\mathcal{S}$ has multiple points, the solution will be the point on the circle's perimeter on the opposite side from the UAV's real destination. In either cases, this solution point will, then, be the attacker's imposed location.
		
	\emph{Case 2:}
	This case represents a more general case when $L_p$ does not intersect the circle formed by the constraint, i.e.,  $ e_{\textrm{max}} > \frac{1}{2a_i(t)} \sqrt{s_i(t)} $. 
	 The solution to the objective function, in this case, will not lie on $L_p$, instead it will lie on another line $L$ that passes through the UAV's real destination and intersects the circle at one point. This line $L$ should make the smallest angle $\alpha$ with the line $L_p$, and, hence, it minimizes the objective function. Thus, the optimal imposed location by the attacker, in this case, is the intersection of circle $ C $ and a circle $ C^{\prime} $ with a radius of $ \sqrt{e^2_{\textrm{max}}+d_i^2(t)} $ with its center at the actual desired destination of the UAV. Circle $C $ can be represented formally as in \eqref{eq:tanget}.  Note that the two circles will always intersect in two points, however, only one of them will minimize the objective function and, hence, this point will also be a solution for the maximization problem in the first constraint.
\end{proof} 

\begin{figure}[t]
		\centering
	\begin{subfigure}[b]{0.55\textwidth}
		\centering
	\includegraphics[clip,width=7cm]{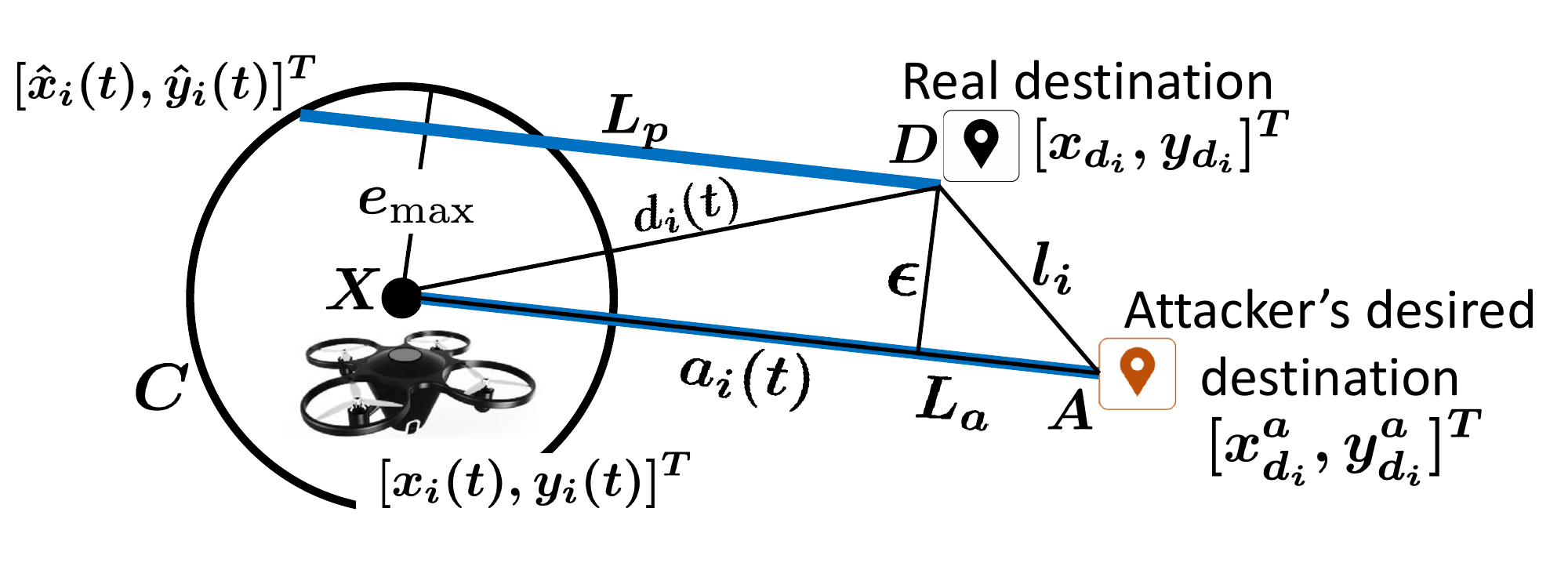}%
	\caption{Case 1}
	\end{subfigure}

	\begin{subfigure}[b]{0.55\textwidth}
				\centering
		\includegraphics[clip,width=7cm]{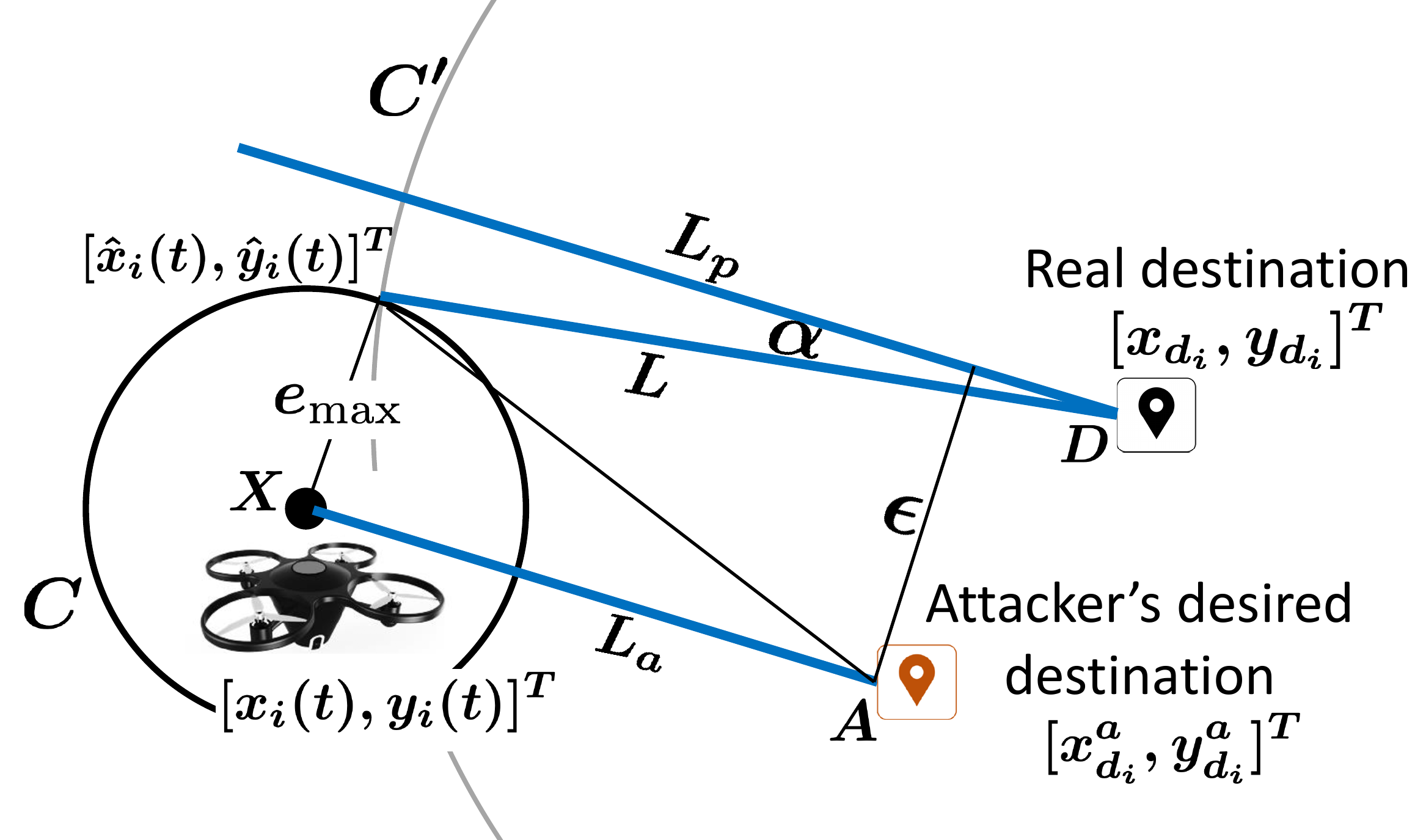}%
		\caption{Case 2}
	\end{subfigure}
	\caption{Determining the attacker's imposed location.}\label{imposedlocation}
\end{figure}


In the second case of Theorem \ref{Theorem1}, the attacker's imposed location will not lead the UAV directly to the attacker's desired destination. Consequently, the attacker might need to impose more than one location on the UAV along its perceived route. Each new imposed location can be calculated from Theorem \ref{Theorem1} with respect to the UAV's new location. Once the attacker can lead the UAV towards its desired destination, the imposed location, according to Theorem \ref{Theorem1}, will be the furthest point in the circle that maintains the same direction. Next, we study the UAV's manipulated route due to the attacker's imposed location.

Consider the UAV's next location under an attack. Similar to (\ref{eq:forcecomponents}), the UAV needs to compute the force components in both directions. The UAV thinks it is at the attacker's imposed location, $\boldsymbol{\hat{x}}_i(t)$, so it calculates the required force to move from $\boldsymbol{\hat{x}}_i(t)$ to its desired destination $\boldsymbol{x}_{d_i}$. Let $\phi^a_i$ be the angle between the $x$ direction and the line connecting the UAV's imposed location to its real destination. The value of $\phi^a_i$ can then be given as:
\begin{equation}\label{eq:phihat}
\phi^a(t) =  \arctan\Big(\frac{y_{d_i} -\hat{y}_i(t)}{x_{d_i} - \hat{x}_i(t)}\Big)=\arctan(\hat{\gamma}).
\end{equation}

Therefore, the force components in both $x$ and $y$ directions can then be given as:
\begin{align}\label{eq:forcecomponentshat}
u_{x_i}^a(t) &= u^{\textrm{max}} \cdot \cos(\phi^a_i(t)), \nonumber \\
u_{y_i}^a(t) &= u^{\textrm{max}} \cdot \sin(\phi^a_i(t)) .
\end{align}

The UAV will then use the optimal controller in \eqref{eq:forcecomponentshat} to move towards its real destination. However, the UAV will actually move from its real location not its perceived location as shown in Fig.~\ref{fake_route}.  Let $\boldsymbol{x}^a_i(t+1) = [x^a_i(t+1), y^a_i(t+1)]^T$ be the UAV's next location under attack. It can then be given, according to (\ref{eq:optimalNextLocation}), as:
\begin{align}\label{eq:nextLocationAttack}
x^a_i(t+\Delta) &= x_i(t) + \Delta \cdot v_{x_i}(t) + c \cdot \Delta^2 \cdot u_{x_i}^a(t), \nonumber\\
y^a_i(t+\Delta) &= y_i(t) + \Delta \cdot v_{y_i}(t) + c \cdot \Delta^2 \cdot u_{y_i}^a(t).
\end{align}

Note that, following the route calculated by (\ref{eq:nextLocationAttack}) may not guarantee the attacker to eventually lead the UAV to the attacker's desired destination. Achieving this depends on multiple parameters such as the UAV's current location and the locations of its real destination and the attacker's desired destination. In general, the attacker should choose its desired destination to satisfy the following condition.

\begin{proposition}\label{Proposition1}
	Under a covert attack, the attacker's desired destination should be located on the same side as the UAV's real destination, in terms of the direction with the largest difference between the UAV's current location and its real destination.
\end{proposition}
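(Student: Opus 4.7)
The plan is to argue Proposition~\ref{Proposition1} by contradiction, reducing the claim to the sign of the autopilot's commanded force along the dominant axis. Without loss of generality assume the dominant axis is $x$ (so $|x_{d_i}-x_i(t)|\ge|y_{d_i}-y_i(t)|$) and, after reorienting, that $x_{d_i}>x_i(t)$. I will show that a covert attack can only drive the UAV toward a point with $x^a_{d_i}>x_i(t)$ as well.

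Suppose, toward contradiction, that $x^a_{d_i}<x_i(t)$. By \eqref{eq:nextLocationAttack}, moving the UAV from $x_i(t)$ to a point strictly west of it requires that at some step the commanded $x$-force become negative, $u^a_{x_i}=u^{\textrm{max}}\cos(\phi^a_i)<0$. Reading the $\arctan$ in \eqref{eq:phihat} as the two-argument atan2 so that the quadrant, and not only the slope, is encoded, this is equivalent to $x_{d_i}-\hat{x}_i<0$, i.e., $\hat{x}_i>x_{d_i}$. The covertness bound \eqref{eq:normDistance} gives $\hat{x}_i\le x_i+e_{\textrm{max}}$, so the two inequalities together would force $e_{\textrm{max}}>x_{d_i}-x_i$: the instance drifted distance must exceed the dominant-axis distance to the real destination. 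The covertness regime, by construction, keeps $e_{\textrm{max}}$ below the per-step jump that would trip the fault detectors, and in particular strictly below the dominant-axis distance to a destination the UAV has not yet reached. Contradiction; hence $x^a_{d_i}>x_i(t)$, which is the proposition.

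The main obstacle, and the only delicate point, is justifying the implicit inequality $e_{\textrm{max}}<|x_{d_i}-x_i(t)|$ along the dominant axis. I would argue this as follows: any spoofer that crosses this threshold in a single step would induce a location jump large enough to flag the fault detectors cited after \eqref{eq:normDistance}, which violates covertness; alternatively the UAV is already within $e_{\textrm{max}}$ of its true destination and the attack is moot. Crucially, the same argument deliberately fails along the non-dominant axis, where $|y_{d_i}-y_i(t)|$ can legitimately be smaller than $e_{\textrm{max}}$; there the spoofer retains the ability to flip the sign of $y_{d_i}-\hat{y}_i$ and hence of the commanded $y$-force. This asymmetry is exactly why the proposition singles out the direction with the \emph{largest} gap and is silent about the other one.
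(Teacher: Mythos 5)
Your argument is essentially the paper's own proof: both reduce the claim to the sign of the commanded force along the dominant axis, observe that flipping it requires the imposed coordinate $\hat{x}_i(t)$ to overshoot $x_{d_i}$, and rule that out by combining the covertness bound $|\hat{x}_i(t)-x_i(t)|\le e_{\textrm{max}}$ with the assumption that the UAV is still more than $e_{\textrm{max}}$ from its real destination. Your explicit use of the two-argument arctangent to justify the sign reasoning, and your remark on why the non-dominant axis is exempt, are small clarifications of the same route rather than a different one.
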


\begin{proof}
The proof is presented in Appendix \ref{Prop1proof}.
\end{proof}

Next, we will discuss the defense mechanism against the considered GPS spoofing attack.

\section{GPS Spoofing Countermeasure}\label{defenseMech}

\subsection{Defense Mechanism for Mitigating Spoofing Attacks}

We propose a defense mechanism built on the concept of cooperative localization~\cite{qu2011cooperative}  which is a framework that enables a UAV to determine its real location in a 2D coordinate system using the locations of three other UAVs. Each UAV is assumed to have a means of measuring its relative distances to the other, neighboring UAVs by inter-UAV range measurements. In cooperative localization, a UAV chooses any three neighboring UAVs, to update its location, given that the selected UAVs are non-collinear. Following this, the UAV can accurately determine its 2D location. While the cooperative localization mechanism in~\cite{qu2011cooperative} can help a UAV to determine its location, it was proposed to be used in case of GPS signals loss and cannot be used, directly, in case of GPS spoofing attack due to the different nature of the problem.

Under a covert GPS spoofing attack, a UAV cannot trust its GPS location nor the locations of other UAVs. Choosing a neighboring UAV for the cooperative localization mechanism will involve a risk as this UAV might itself be under attack. To overcome this limitation, we propose a defense mechanism based on the fact that a GPS spoofing attacker can target only one UAV at a time, as discussed earlier. In our proposed mechanism, a UAV will use the locations of four neighboring UAVs, instead of three, to determine its real location by identifying the UAV under attack and eliminating it from the calculations. The proposed mechanism has the same requirements of cooperative localization, i.e., the UAVs are non-collinear, a UAV can request other UAVs' locations through inter-UAV communications, and each UAV needs to be able to measure its relative distances to its neighboring UAVs.

Due to the fact that determining a 2D location requires only three UAVs, the fourth UAV will be used to check the results as follows. A UAV will calculate its location using all the permutations of three UAVs formed from the selected four UAVs. Let any UAV and its selected four neighbor UAVs represent a set given by $\mathcal{F}=\{F_i\}$, where $F_i$ represent a UAV and $i \in \{1, \dots, 5\}$.  Assume UAV $F_1$ wants to calculate its location, let its location calculated from the GPS signals be $ \tilde{\boldsymbol{x}}_1$. The UAV, $F_1$, cannot determine at this point if this location is real or a spoofed location. The UAV will then calculate its location four more times using all the groups formed of three UAVs out of the selected four UAVS. For example, $\tilde{\boldsymbol{x}}_2$ can be calculated using UAVs $F_2,F_3,$ and $F_4$, $\tilde{\boldsymbol{x}}_3$ can be calculated using UAVs $F_2,F_3,$ and $F_5$, and so on for $\tilde{\boldsymbol{x}}_4$ and $\tilde{\boldsymbol{x}}_5$.

The UAV can then determine its real location according to the following cases:
\begin{itemize}
	\item If there is no attack, the value of $\tilde{\boldsymbol{x}}_1$ will equal all the other values, i.e., $\tilde{\boldsymbol{x}}_i , i=2,\dots,5$, will all be the same.
	\item If UAV $F_1$ is under attack, then all the values $\tilde{\boldsymbol{x}}_i ,i=2,\dots,5$, will be equal but their value will not equal $\tilde{\boldsymbol{x}}_1$. In this case, the real location of UAV $F_1$ is the value calculated from its neighboring UAVs. 
	\item If another UAV, rather than UAV $F_1$, is under attack, then the value of $\tilde{\boldsymbol{x}}_1$ will equal only one of the four other values. The other three values will be the same and the UAV that contributed to calculating these values will be the one under attack. 
\end{itemize}

\begin{figure}[t]
	\centerline{\includegraphics[width=7.3cm]{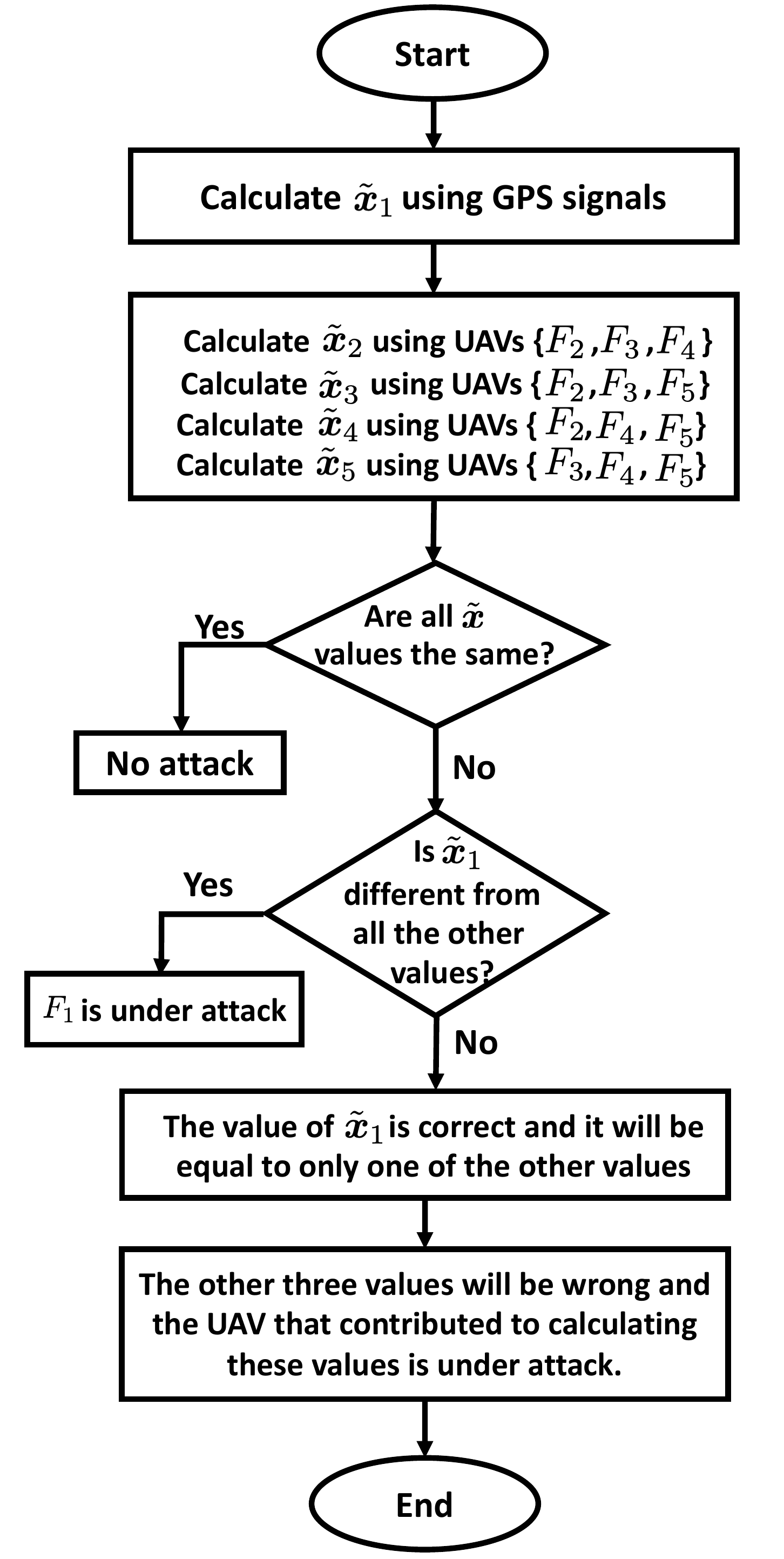}}
	\caption{Flowchart for the defense mechanism.}\label{fig:flowchart}

\end{figure}

Note that the technique used in~\cite{heng2015gps} was shown to require four different cross-check receivers, to detect the GPS spoofing attack, when $15\%-25\%$ of the cross-check receivers are unreliable. Comparing these findings to our proposed defense mechanism, the same number of UAVs, i.e., four cross-check receivers, will be required to detect a single attack, i.e., to detect that $25\%$ of the cross-check receivers are unreliable. However, our defense mechanism is able to not only detect attacks but to also determine the real locations.

Note that the proposed defense mechanism can similarly be extended to the case in which more than one UAV is being attacked simultaneously. The total number of UAVs and the complexity of determining the real locations are discussed in the following proposition.
\begin{proposition}\label{Prop2}
When $n$ UAVs are being attacked simultaneously, a total of $n+4$ UAVs, traveling in proximity, are needed such that each UAV can determine its real location. The complexity of finding the real location will be $\mathcal{O}{\binom{n+3}{3}}$ per UAV.
\end{proposition}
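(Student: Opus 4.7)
The plan is to generalize the $n=1$ construction described above, where five UAVs were required and $\binom{4}{3}=4$ triangulations were performed. I would prove the two claims of the proposition separately: first the sufficiency of $n+4$ UAVs (together with the necessity coming from the basic cooperative localization requirement), and then the counting of triplet evaluations that yields the stated complexity.

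For the UAV count, I would start from the fact that cooperative localization in 2D requires three non-collinear trusted reference UAVs. If a particular UAV $F_0$ wishes to determine its own location in the presence of up to $n$ simultaneously spoofed UAVs in its vicinity, then among its chosen neighbors at most $n$ may be compromised. To guarantee that at least three unattacked (hence trustworthy) reference neighbors remain available, $F_0$ must select at least $n+3$ neighbors, giving a total of $n+4$ UAVs traveling in proximity. This mirrors the $n=1$ case: three neighbors suffice for triangulation, and one extra neighbor provides the cross-check needed to localize and discard the spoofed one.

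For the algorithm, $F_0$ enumerates every triplet drawn from its $n+3$ neighbors and, as in the $n=1$ construction, performs cooperative localization on each triplet to produce a candidate location $\tilde{\boldsymbol{x}}_k$. Triplets composed entirely of unattacked UAVs must all yield the same point, namely the true location of $F_0$; any triplet containing one or more attacked UAVs uses at least one falsified reference and therefore generically yields a different candidate. $F_0$ then identifies its true location as the candidate shared by the largest consistent subset of triplets (and, by the same logic applied to each neighbor's computed distance-consistency, pinpoints which of its neighbors are the spoofed ones). The total number of triplets to enumerate is $\binom{n+3}{3}$, and each triangulation is a constant-work operation, giving $\mathcal{O}\!\left(\binom{n+3}{3}\right)$ cost per UAV.

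The main obstacle in making this rigorous is arguing that the true candidate is always uniquely distinguishable from the spoofed ones. The key point I would invoke is the same one implicitly used for $n=1$: because the spoofer is constrained by the instance drift bound in \eqref{eq:normDistance} and because neighbors are assumed non-collinear and in generic position, spoofed triplets produce candidate points that differ from the true location and from each other, so the unattacked triplets form the unique agreeing subset. A brief case analysis on whether $F_0$ itself is among the $n$ attacked UAVs (in which case at least $4$ unattacked neighbors remain, giving $\binom{4}{3}=4$ corroborating triplets) or not (in which case exactly $3$ unattacked neighbors remain, giving $\binom{3}{3}=1$ but also matching $F_0$'s own GPS reading) completes the correctness argument, and the counting bound $\binom{n+3}{3}$ is then immediate.
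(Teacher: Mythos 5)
Your proof is correct and follows essentially the same route as the paper's: you establish that four unattacked UAVs (hence $n+4$ total) suffice, enumerate the $\binom{n+3}{3}$ triplets of neighbors, and split into the same two cases depending on whether the checking UAV is itself attacked. The only difference is that you explicitly flag the need for the clean triplets' consensus location to be distinguishable from candidates produced by compromised triplets, a genericity point the paper leaves implicit, which is a small but welcome addition.
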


\begin{proof}
The case $n=1$ has been discussed in details earlier where four neighboring UAVs are required. Similarly, for the case in which $n >1$ UAVs are being attacked, four non-attacked neighboring UAVs are still enough to determine the real locations. In this case, if a UAV wants to determine its location, it will start by calculating its location using its GPS receivers and compare this location to the locations calculated from all possible groups of three neighboring UAVs. 

Since $n$ UAVs are under attack, there will be $4$ UAVs reporting their real locations. If these $4$ UAVs include the checking UAV, then its location, calculated from its GPS receivers, will match the location calculated from the group of the other $3$ UAVs. This will represent the real location of the checking UAV. On the other hand, if the $4$ UAVs, that are not attacked, do not include the checking UAV, then the locations calculated from the groups formed of these $4$ UAVs, i.e., $\binom{4}{3} = 4$ groups will give the same (real) location which will differ from the checking UAV's own location. This procedure will allow each checking UAV to determine whether it is under attack, to determine its real location, and to identify which UAVs are under attack.

The complexity of finding the real location depends on the total number of the locations to be compared which corresponds to the number of the groups formed from each $3$ UAVs, i.e., $\binom{(n+4)-1}{3} = \binom{n+3}{3}$, plus the UAV's own location. Therefore, the total number of calculations at each UAV will be $\binom{n+3}{3}+1$, which results in a complexity of $\mathcal{O}{\binom{n+3}{3}}$.
\end{proof}

In Proposition \ref{Prop2}, when the number of the attacked UAVs is very large, the time complexity might affect the applicability of the defense mechanism depending on the application. For instance, in time-sensitive applications, where the UAVs need to make rapid decisions, the increased number of calculations may hinder the UAVs' ability to determine their locations in a timely manner. On the other hand, if the goal is to lead the UAVs to their destinations without being constrained by time-sensitive tasks, then, the UAVs can still apply the defense mechanism even if the number of the attacked UAVs is large.

Finally, we summarize the basic case of our defense mechanism steps in the flow chart shown in Fig. ~\ref{fig:flowchart}. Note that, following our approach in Fig. \ref{fig:flowchart}, a UAV can determine its real location and identify which UAV is under attack. However, the UAV under attack will have to also execute the same procedure to determine its real location. One approach is to allow all the UAVs to continuously use the proposed defense mechanism along their travel paths. However, this might be challenging to do in long routes as the energy consumption due to exchanging communication messages, measuring distances to other UAVs, and calculating the locations will be significant compared to the UAVs' limited power.

Given that a GPS spoofer can target only one UAV at a time, we next propose a new approach to regulate the use of our proposed defense mechanism among the UAVs through studying the interactions between the GPS spoofer and a group of UAVs, managed by their operator. The drone operator wants to regulate the use of the defense mechanism, in an energy-efficient manner, by determining when each UAV can use it along its travel path, to avoid being captured. On the other hand, the GPS spoofer wants to take control of the UAVs to send them to other destinations where they can be captured. In doing so, both the operator and the spoofer will be affected by each others' actions. Therefore, we propose to use game theory~\cite{han2012game}  to model these interactions. In our game, the drone operator is the defender and the GPS spoofer is the attacker. As the attack and defense mechanisms are applied along each UAV's travel paths, the game will be time-depended, and, hence, a dynamic game model is appropriate.


\subsection{Dynamic Stackelberg Game Formulation}

We formulate a dynamic game in which every player, i.e., the attacker and the defender chooses its actions at every time step. Since the spoofer needs to monitor the targeted UAV to generate tailored spoofing signals, the spoofer will be able to attack only one UAV at any given time step. At each time, the attacker can choose one action out of set $\mathcal{Z}^a$ which represents the choice of one UAV to attack.
Similarly, the drone operator can choose one UAV to use the defense mechanism defined in Section~\ref{defenseMech} to update its location, given that the spoofer can attack only one UAV at a time. Let $\mathcal{Z}^d$ be the set from which the defender is choosing its actions, i.e., a UAV to apply the defense mechanism.


Note that, in the basic case of the proposed defense mechanism, each UAV needs the locations of four neighboring UAVs to determine its location. Therefore, each five UAVs can be seen as a separate group in which one UAV is applying the defense mechanism using the locations of the other four UAVs. For other cases in which more than one UAV is being attacked, the number of UAVs per group will be different. However, these cases are beyond the scope of this work as each case requires a separate analysis. In the following, we consider only the basic case of the defense mechanism in which each five, closely traveling, UAVs are considered to form a group that is applying the defense mechanism locally within the group. 
Hence, hereinafter, we consider a game in which the drone operator is protecting only one group, because the solution can be easily extended to the case of multiple groups of UAVs. 

The actions of each player, when taken at a time step, will affect the next locations of the UAVs. If a UAV is applying the defense mechanism, then it can accurately determine its location whether there is an attack or not. On the other hand, if a UAV is dependent on the GPS signals, it will be affected by the attacker's actions and its next location will depend on whether it is attacked or not. Here, we assume that each spoofing attack is successful in that the attacker will gain control of the UAV's GPS receivers and impose its desired location on the UAV's GPS. Let $z^a_i(t), i = 1,\dots,5$ be a variable indicating whether the attacker has chosen to attack UAV $i$, where $z^a_i(t)=1$ means the UAV $i$ is being attacked at time step $t$, and $z^a_i(t)=0$ otherwise. Similarly, let  $z^d_i(t), i = 1,\dots,5$, be a variable indicating whether the defender has chosen to protect UAV $i$, where $z^d_i(t)=1$ means the UAV $i$ is applying, at time step $t$, the defense mechanism, i.e., being protected, and $z^d_i(t)=0$ otherwise. Each UAV's next location can then be given by:
\begin{align}\label{NextLocationActions}
\boldsymbol{x}_i(t+\Delta,z^d_i(t),z^a_i(t)) = & z^d_i(t) \cdot \boldsymbol{x}_i(t) +  \big(1-z^d_i(t)\big) \big[z^a_i(t) \cdot \nonumber\\
& \boldsymbol{x}^a_i(t) + (1-z^a_i(t)) \cdot \boldsymbol{x}_i(t)\big],
\end{align}
where $\boldsymbol{x}_i(t)$ and $\boldsymbol{x}^a_i(t)$ are given by (\ref{eq:optimalNextLocation}) and (\ref{eq:nextLocationAttack}), respectively. Equation (\ref{NextLocationActions}) can be rearranged as:
\begin{align}\label{NextLocationActionsSimplified}
\boldsymbol{x}_i(t+\Delta,z^d_i(t),z^a_i(t)) = & \Big( 1- z^a_i(t) + z^a_i(t) \cdot z^d_i(t) \Big) \cdot \boldsymbol{x}_i(t)  \nonumber \\
 &+  \Big(z^a_i(t) -z^a_i(t) \cdot z^d_i(t)\Big)   \cdot \boldsymbol{x}^a_i(t).
\end{align}

From Theorem~\ref{Theorem1}, we can observe that the attacker's imposed location $\boldsymbol{x}^a_i(t)$  can be accurately calculated from the UAV's current location $\boldsymbol{x}_i(t)$, the UAV's real destination, and the attacker's desired destination. However, as the UAV's real destination and the attacker's desired destination are constants, the attacker's imposed location, at any given time step, can be given as a function of the UAV's real location. Therefore, the location in (\ref{NextLocationActionsSimplified}) can be given as a function in the UAV's current location and both player's actions, i.e., $\boldsymbol{x}_i(t+\Delta,z^d_i(t),z^a_i(t)) = f\big(\boldsymbol{x}_i(t),z^d_i(t),z^a_i(t)\big)$.

Next, we define the outcomes (utilities) for both players due to their interactions. Since the objective for each player is to move each UAV to its own desired destination, each player will take actions to minimize the distance between the current UAV's location and the player's desired destination. Thus, we define the utility function for the attacker, at each time step, as follows:
\begin{equation}\label{AttackersUtility}
U^a(t,z^d_i(t),z^a_i(t))= \sum_{i=1}^{5}  \left\lVert  \boldsymbol{x}^a_{d_i} - \boldsymbol{x}_i(t+\Delta,z^d_i(t),z^a_i(t)) \right \rVert^2_2.
\end{equation}

Similarly, the defender's utility, at each time step, can be given by:
\begin{align}\label{DefendersUtility}
U^d(t,z^d_i(t),z^a_i(t))= \sum_{i=1}^{5}  \left\lVert  \boldsymbol{x}_{d_i} - \boldsymbol{x}_i(t+\Delta,z^d_i(t),z^a_i(t)) \right \rVert^2_2.
\end{align}

Now, consider the players' actions and utilities over all time steps. Assume the maximum possible number of time steps is $\tau$, which is determined by the maximum time that any UAV can travel based on its fuel or battery. This number is known to the defender but the attacker does not need to know this number. From Proposition \ref{Proposition1}, the GPS spoofer will not be able to change the UAV's direction and, thus, once a UAV passes beyond the attacker's desired destination, the attacker will no more consider it when choosing its actions. Therefore, the game is considered to end for the attacker when all the UAVs pass beyond the attacker's desired destinations.

Consider the players' strategies which are defined as the players' actions taken at each time step $t$. Let $\boldsymbol{\beta}^a$ be an attacker's strategy defined by $\boldsymbol{\beta}^a =  \{z^a_i(1), \dots, z^a_i(\tau)\}$, and let $\mathcal{A}$ be the set of all the attacker's possible strategies. Similarly, let $\boldsymbol{\beta}^d$ be a defender's strategy defined by $\boldsymbol{\beta}^d = \{z^d_i(1), \dots, z^a_i(\tau)\}$, and let $\mathcal{D}$ be the set of all the defender's possible strategies. The attacker's accumulated utility will then be:
\begin{align}\label{AttackersAccUtility}
J^a(\boldsymbol{\beta}^d,\boldsymbol{\beta}^a)&= \sum_{t=1}^{\tau}  U^a(t\Delta,z^d_i(t),z^a_i(t)) \nonumber \\
&= \sum_{t=1}^{\tau}  \sum_{i=1}^{5}  \left\lVert  \boldsymbol{x}^a_{d_i} - \boldsymbol{x}_i(t\Delta,z^d_i(t),z^a_i(t)) \right \rVert^2_2.
\end{align}

Similarly, the defender's accumulated utility will be given by:
\begin{align}\label{DefendersAccUtility}
J^d(\boldsymbol{\beta}^d,\boldsymbol{\beta}^a)&= \sum_{t=1}^{\tau}  U^d(t\Delta,z^d_i(t),z^a_i(t)) \nonumber \\
&= \sum_{t=1}^{\tau}  \sum_{i=1}^{5}  \left\lVert  \boldsymbol{x}_{d_i} - \boldsymbol{x}_i(t\Delta,z^d_i(t),z^a_i(t)) \right \rVert^2_2.
\end{align}

To solve this dynamic game, we propose to use the  dynamic Stackelberg game model~\cite{simaan1973additional}. In Stackelberg games, one player, the leader, acts first by selecting its strategy and, then, the other player, the follower, can respond by selecting its strategy. In our game formulation, the drone operator will act as the leader as it can choose which UAVs to protect in advance and the attacker can observe this selection and responds by choosing which UAVs to attack.

Now, we can formally formulate a dynamic Stackelberg game $\mathit{\Xi}$ described by the tuple $\left\langle \mathcal{M}, \mathcal{A},  \mathcal{D},J^a, J^d, \tau \right\rangle$ where $\mathcal{M}$ is the set of the two players: the defender and the attacker, and the rest of the parameters as defined earlier. Based on the utility functions, the game is non-zero sum. This means, every player will try to minimize its utility and the sum of the utilities will not equal zero. Moreover, each player seeks to follow a strategy that minimizes its utility function given the other player's strategy. Next, we study our approach of finding the optimal strategies, for each player, under the formulated game.

\subsection{Stackelberg Game Solution }

The most commonly adopted solution for Stackelberg dynamic games is known as the Stackelberg equilibrium strategy concept~\cite{simaan1973additional}. This solution is given by a pair of strategies $(\boldsymbol{\beta}^{a*},\boldsymbol{\beta}^{d*})$  defined as follows.

\begin{definition}
The \emph{Stackelberg equilibrium strategies}, when the defender is the leader, are derived as follow. Let $r:\mathcal{D} \rightarrow \mathcal{A}$ be a mapping between the defender's strategies and the attacker's strategies, such that:
\begin{equation}\label{eq:attackerResponse}
J^a(\boldsymbol{\beta}^d,r(\boldsymbol{\beta}^d)) \le J^a(\boldsymbol{\beta}^d,\boldsymbol{\beta}^a) , \forall \boldsymbol{\beta}^a \in \mathcal{A},
\end{equation}
and the set:
\begin{equation}\label{eq:reactionset}
R^a = \{(\boldsymbol{\beta}^d,\boldsymbol{\beta}^a) \in  \mathcal{D} \times \mathcal{A}: \boldsymbol{\beta}^a = r(\boldsymbol{\beta}^d), \forall \boldsymbol{\beta}^d \in  \mathcal{D}\},
\end{equation}
is the reaction set for the attacker when the defender is the leader. The \emph{Stackelberg equilibrium strategies}  $(\boldsymbol{\beta}^{a*},\boldsymbol{\beta}^{d*}) \in R^a$ of the game should then satisfy:
\begin{equation}\label{eq:StackelbergSolution}
J^d(\boldsymbol{\beta}^{a*},\boldsymbol{\beta}^{d*}) \le J^d(\boldsymbol{\beta}^d,\boldsymbol{\beta}^a) , \forall (\boldsymbol{\beta}^d,\boldsymbol{\beta}^a) \in R^a.
\end{equation}
\end{definition}

Note that, solving for the Stackelberg equilibrium strategies that satisfy (\ref{eq:StackelbergSolution}) depends on the information available for each player, at each time step~\cite{basar1999dynamic}. According to~\cite{basar1999dynamic}, dynamic games can be solved using open-loop strategies, closed-loop strategies, or feedback strategies. In the formulated game, each player selects a strategy that minimizes its utility which involves taking actions, at each time step to control the UAVs' locations. In doing so, both players can observe the initial locations of the UAVs as well as their subsequent locations up to the current time step. This type of information coincides with the notion of \emph{closed-loop perfect information}~\cite{basar1999dynamic}, and, thus, we use closed-loop Stackelberg strategies to solve the formulated game. Note that the equilibrium strategies should satisfy \eqref{eq:StackelbergSolution} irrespective of the type of the solution.

Note that, the proposed Stackelberg solution is based on the assumption of complete information, i.e., both players have full information about their opponent's. This assumption should hold true for a powerful attacker that can accurately observe the drones. For the defender, it is assumed that the defender knows the attacker's desired destinations and, hence, can determine the attacker's reaction set in advance according to its role as a leader in the Stackelberg leader-follower scenario. To gain this information, one approach for the defender is to perform reconnaissance before launching the UAVs. Another practical solution is to observe the attacker's actions and to use machine learning, e.g., reinforcement learning to update its information about the UAVs' deviated routes under attack, and, hence, the directions of the attacker's desired destinations. However, this will require separate analysis and, hence, is left for future work. Here, we handle the incomplete information case for the defender when it cannot observe the attacker's actions. In this case, the defender might not be able to achieve the same outcome as a Stackelberg solution, based on the available information. Therefore, the defender can apply non-game-theoretic strategy selection techniques as discussed in the simulation results in Section \ref{Sec:results}.  

In our formulated game, the cost functions in \eqref{AttackersAccUtility} and \eqref{DefendersAccUtility} will ensure the existence of the Stackelberg solution, under closed-loop perfect information~\cite{basar1999dynamic}. However, this solution might not be unique, as there might be multiple strategies that yield the same utilities for the players. Solving for closed-loop strategies, in general, is challenging, especially when the number of time steps is large. In the formulated game, the number of available actions for each player, at every time step, equals $5$ which is the number of the UAVs. As a strategy is a combination of $\tau$ different actions, there will be $5^\tau$ different strategies available for each player. The solution follows by calculating the attacker's response for each of $5^\tau$ different defender's strategies, which involves testing all the attacker's $5^\tau$ strategies per a defender's strategy. Finally, the defender selects the pair of strategies that minimize its utility. The complexity of this solution approach will then be $\mathcal{O}(5^{2\tau})$, which is exponential in terms of the number of time steps. This, in fact, might not be feasible when the value of $\tau$ is large, as is the case in the UAVs' traveling model. To this end, we propose a computationally efficient solution of the game as shown in the next theorem.

\begin{theorem}\label{Theorem2}
The solution of the closed-loop dynamic Stackelberg game $\mathit{\Xi}$ is equivalent to solving the static Stackelberg equilibrium at each individual time step.
\end{theorem}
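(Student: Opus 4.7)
The plan is to exploit two structural properties of the game $\mathit{\Xi}$: (a) the accumulated utilities $J^a$ and $J^d$ in (\ref{AttackersAccUtility})--(\ref{DefendersAccUtility}) are additive over time, with per-step costs $U^a(t,\cdot,\cdot)$ and $U^d(t,\cdot,\cdot)$; and (b) the state recursion $\boldsymbol{x}_i(t+\Delta)=f(\boldsymbol{x}_i(t),z^d_i(t),z^a_i(t))$ in (\ref{NextLocationActionsSimplified}) is Markovian, depending only on the current state and current-step actions. Together, these features invite a Bellman-style backward-induction argument. I would formalize the closed-loop strategies as mappings $z^d_i(t)=\mu^d_t(\boldsymbol{x}(t))$ and $z^a_i(t)=\mu^a_t(\boldsymbol{x}(t))$ from the current joint state $\boldsymbol{x}(t)=[\boldsymbol{x}_1(t),\ldots,\boldsymbol{x}_5(t)]$ into $\mathcal{Z}^d$ and $\mathcal{Z}^a$, and then define the cost-to-go functions $V^d_t(\boldsymbol{x})$ and $V^a_t(\boldsymbol{x})$ as the remaining accumulated utilities of the defender and the attacker.

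First, I would treat the base case $t=\tau$, which is trivially a static one-shot Stackelberg instance whose solution is, by definition, the static Stackelberg equilibrium; this pins down $V^d_\tau$ and $V^a_\tau$. Next, in the inductive step at time $t$, I would write the Bellman-type recursion
\begin{equation}
V^a_t(\boldsymbol{x}) = \min_{z^a} \Bigl\{ U^a(t,z^d,z^a) + V^a_{t+\Delta}\bigl(f(\boldsymbol{x},z^d,z^a)\bigr) \Bigr\},
\end{equation}
and similarly for $V^d_t$ after evaluating the attacker's reaction, following the leader--follower logic of (\ref{eq:attackerResponse})--(\ref{eq:StackelbergSolution}). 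The target is to show that both minimizers coincide with those of the static one-stage costs $U^a(t,\cdot,\cdot)$ and $U^d(t,\cdot,\cdot)$.

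To close the induction, I would invoke a monotonicity observation that follows from Section~\ref{Sec:Model}: once a UAV is closer to a candidate destination at time $t+\Delta$, the straight-line optimal controller derived in (\ref{eq:forcecomponents}) keeps it at least as close at every subsequent step, regardless of the actions chosen thereafter. Combined with the per-player squared-distance form of $U^a$ and $U^d$, this gives that $V^a_{t+\Delta}\bigl(f(\boldsymbol{x},z^d,z^a)\bigr)$ is a monotone transform of $\|\boldsymbol{x}^a_{d_i}-\boldsymbol{x}_i(t+\Delta,z^d,z^a)\|_2^2$ and likewise for the defender. Therefore the arg-min of the Bellman expression is identical to the arg-min of the one-stage utility, and the closed-loop Stackelberg equilibrium at time $t$ reduces to the static Stackelberg equilibrium of the stage game parameterized by $\boldsymbol{x}(t)$.

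The main obstacle I anticipate is precisely the monotonicity claim in the inductive step: a priori, a myopically optimal action could leave the state in a geometric configuration that hampers the remaining horizon (for instance, by putting one UAV far from its destination while saving another). Overcoming this will require carefully exploiting the additive-across-UAVs structure of $U^a$ and $U^d$ together with the fact that, at any given stage, only one UAV is affected by an attack action and only one by a defense action, so the state update in (\ref{NextLocationActionsSimplified}) touches a single coordinate of $\boldsymbol{x}(t)$ per player. With this, the multi-UAV cost-to-go decouples into a sum of per-UAV cost-to-go's, each inheriting the scalar monotonicity needed; the rest of the proof is then a routine bookkeeping exercise.
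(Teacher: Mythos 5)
Your overall skeleton (additivity of $J^a,J^d$ over stages plus the Markov update in \eqref{NextLocationActionsSimplified}, followed by backward induction on cost-to-go functions) is a legitimate and arguably more standard framing than the paper's, which instead expands the two-stage cost explicitly for $\tau=2$, fixes a defender strategy $\{z^d_j(1)=1,z^d_k(2)=1\}$, and argues directly that the follower's best response at the second stage depends only on the state reached after the first stage, so that the reaction set in \eqref{eq:reactionset} can be built stage by stage and then lifted to general $\tau$. However, your inductive step has a genuine gap, and it is exactly the step you defer as ``routine bookkeeping.'' Two things go wrong. First, the cost-to-go does \emph{not} decouple into a sum of per-UAV cost-to-go's: because the attacker (and the defender) can touch only one UAV per time step, the optimal continuation for UAV $i$ depends on which other UAVs still demand attention over the remaining horizon; this shared action budget couples the UAVs, so $V^a_{t+\Delta}$ is a minimum over joint schedules, not a sum of independent scalar problems. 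Second, the monotonicity lemma you invoke is false as stated: the controller \eqref{eq:forcecomponents} steers a UAV toward its \emph{real} destination (or a spoofed one), not toward $\boldsymbol{x}^a_{d_i}$, so the distance to the attacker's desired destination is not monotone along the trajectory --- the paper itself relies on UAVs ``passing beyond'' $\boldsymbol{x}^a_{d_i}$, after which that distance grows --- and being closer at $t+\Delta$ does not imply staying at least as close at every later step under identical subsequent actions (the two distance-versus-time curves can cross; moreover the dynamics carry velocity as additional hidden state). Consequently the claim that $V^a_{t+\Delta}(f(\boldsymbol{x},z^d,z^a))$ is a monotone transform of the one-stage cost, which is what licenses replacing $\arg\min\{U^a+V^a_{t+\Delta}\circ f\}$ by $\arg\min U^a$, is unsupported; that interchange is the entire content of the theorem, not an afterthought.

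The paper avoids value functions altogether: it writes $J^a$ for $\tau=2$ as a sum in which each stage's summand involves only that stage's state and that stage's actions, notes that an action perturbs a single UAV while all other UAVs follow their nominal routes, and concludes that the minimizing second-stage action is determined solely by the post-first-stage locations, so the best response --- and then the leader's optimization in \eqref{eq:StackelbergSolution} --- can be computed one stage at a time. If you wish to keep your backward-induction framing, you need to prove the argmin interchange directly, for instance by showing that the first-stage action influences $V^a_{t+\Delta}$ only through the single attacked UAV's position and that this influence cannot reverse the stagewise ordering of actions, rather than routing the argument through per-UAV decoupling and trajectory monotonicity, neither of which holds in this game.
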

\begin{proof}
The proof is presented in Appendix \ref{theorem2proof}.
\end{proof}

From Theorem \ref{Theorem2}, we can infer that the complexity of obtaining the solution will be reduced to determining the attacker's response and the defender's Stackelberg action at each time step. Thus, the complexity of the game will be reduced to $\mathcal{O}(5^2\tau)$, which is linear in terms of the number of time steps. Note that, as discussed earlier, the solution of the formulated game is non-unique. Theorem \ref{Theorem2} will then allow obtaining one of the Stackelberg equilibrium strategies.


\section{Simulation Results and Analysis}\label{Sec:results}

For our simulations, we consider the case of one GPS spoofer and one group of five UAVs to better highlight the outcome of the dynamic game. The analysis will can apply to multiple groups of UAVs with an expected better outcome for the UAVs due to the decreased probability of attacking a single UAV.

\begin{figure*}
	\centerline{\includegraphics[width=\paperwidth]{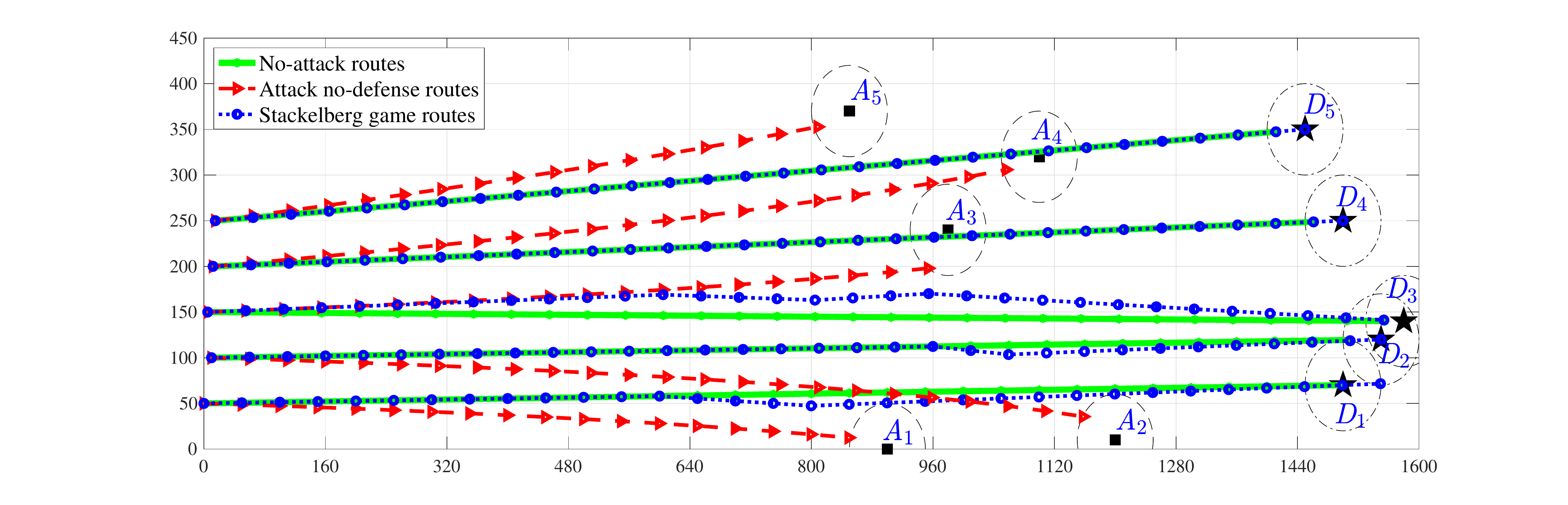}}
	\caption{UAVs routes under no attack, no defense, and under the proposed Stackelberg game solution.}\label{fig:visoutput}
	\vspace{0.1cm}
\end{figure*}

First, in Fig.~\ref{fig:visoutput}, we show a visual output of the proposed game. The UAVs' starting locations, real destinations, and the attacker's desired destinations are all shown in this figure. The points $A_i$, $i=1,\dots,5$ are the attacker's desired destinations for each UAV and the points $D_i$, $i=1,\dots,5$ are the real destinations for each UAV. The UAVs update their locations every $50$ meters and  the value of $e_{\textrm{max}}$ is assumed to be $50$~m as well. Note that the maximum value of $e_{\textrm{max}}$ that keeps the attack covert depends on the fault detector used within the UAV~\cite{su2016stealthy} and it can range from few meters up to $90$~m for different detector settings.

Fig.~\ref{fig:visoutput} shows the different routes that each UAV can follow. The straight lines connecting each UAV's starting point to its destination, $D_i$, represent the shortest paths that each UAV will follow if there is no attack. These routes result from calculating the UAVs' locations according to \eqref{eq:optimalNextLocation} at each time step.
On the other hand, the routes from the UAVs' starting points to the attacker's desired destinations, $A_i$, are the routes resulting from following the attacker's imposed locations at each time step, i.e., the locations in \eqref{eq:nextLocationAttack}. Note that, these routes, unlike the shortest paths, are not straight lines as they are composed of multiple short segments each of which is the UAV path after perceiving the attacker's imposed location. In some of these paths, the attacker imposes multiple locations along the path causing the route to deviate towards the attacker's desired destinations. The UAVs can follow these routes only if there is an attack while the defense mechanism is not used.
Finally, Fig.~\ref{fig:visoutput} also shows the routes resulting from the proposed Stackelberg game solution. These routes are bounded by the previous two routes and may coincide with parts of these routes. Every change in these routes represents a change in the attacker's response action, and, hence a change in the UAV under attack. In the following, we will study how the routes resulting from the Stackelberg game compare to the previous two sets of routes, i.e., routes under no attack and routes under attack while no defense is used.

%
%

Next, we study the effect of GPS spoofing attacks on the UAVs' traveling routes.

\begin{figure}[t]
	\centerline{\includegraphics[width=7.5cm]{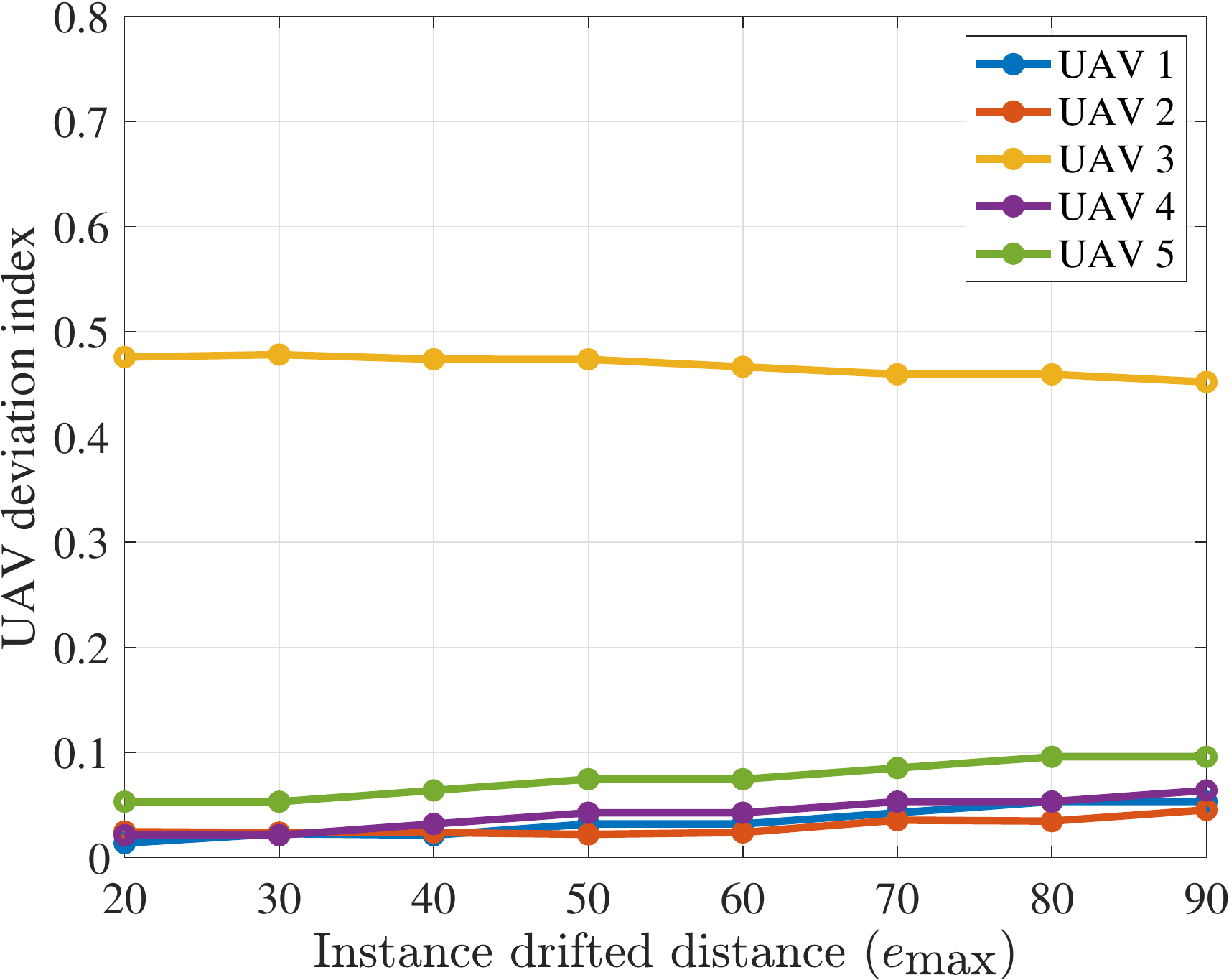}}
	\caption{UAVs deviation index as a relation in the instance drifted distance.}\label{fig:resilience1}
\end{figure}

\subsection{UAVs Deviation due to Spoofing Attacks}
To study the attacker's effect on the UAVs' traveling routes, we define a deviation metric, for each UAV, to compare the UAV's route resulting from the Stackelberg game with both the no-attack route and the attack no-defense route shown in Fig. \ref{fig:visoutput}. Let $\boldsymbol{x}_i^r(t)$ be UAV $i$'s location on the expected route under no attack and let $\boldsymbol{x}_i^f(t)$ represent its locations on the attacker's desired route. We then define $\theta_i(t)$ to be the deviation of UAV $i$ at time step $t$ given by:
\begin{equation}
\theta_i(t) = 1-  \frac{\Bigl|\!\Bigl|  \boldsymbol{x}_i(t) - \boldsymbol{x}^f_i(t)  \Bigr|\!\Bigr|^2_2}{\Bigl|\!\Bigl|  \boldsymbol{x}_i^r(t) - \boldsymbol{x}^f_i(t)  \Bigr|\!\Bigr|^2_2}, 
\end{equation}
such that when a UAV is traveling on a no-attack route, the value of $\theta_i(t)$ will equal $0$. Similarly, if a  UAV is traveling on the attacker's desired route, the value of $\theta_i(t)$ will equal $1$. Any other value in between the two routes, the value of $\theta$ will be $0 < \theta < 1$.  We can, then, define the deviation index $\Theta_i$ for UAV $i$ as the average of its deviation over all time steps. Note that the deviation index can capture how far each UAV has traveled from its planned route towards the attacker's desired route. However, a higher deviation index does not necessary mean that this specific UAV will be captured by the attacker. It merely means that the attacker has disrupted the UAV's original route.

In Fig. \ref{fig:resilience1}, we study the effect of the instance drifted distance, $e_{\textrm{max}}$, on the UAVs' deviation indices. To better highlight the effect of $e_{\textrm{max}}$, we allowed the UAVs to update their locations frequently by setting the update distance to $15$ m. We notice that, as $e_{\textrm{max}}$ increases, the attacker will be able to induce bigger changes to the UAVs' locations causing them to deviate more from the planned routes hence increasing their deviation index. For instance, when $e_{\textrm{max}}= 20$~m, some of the UAVs have almost zero deviation from their planned routes. On the other hand, when $e_{\textrm{max}}= 90$~m, most UAVs have a slight deviation  from their planned routes. We also note, from Fig. \ref{fig:resilience1}, that UAV $3$, has much higher deviation than the other UAVs. This happens as UAV $3$ affects the attacker's utility the most while having a smaller effect on the defender's utility. Thus, UAV $3$ is chosen by the attacker, at most time steps, as its response action while the defender chooses other UAVs to protect. Note that the players' utilities, and, hence, their chosen actions (UAVs) depend on the UAVs' current locations and both the real and the attacker's desired destinations.

\begin{figure}[t]
	\centerline{\includegraphics[width=7.5cm]{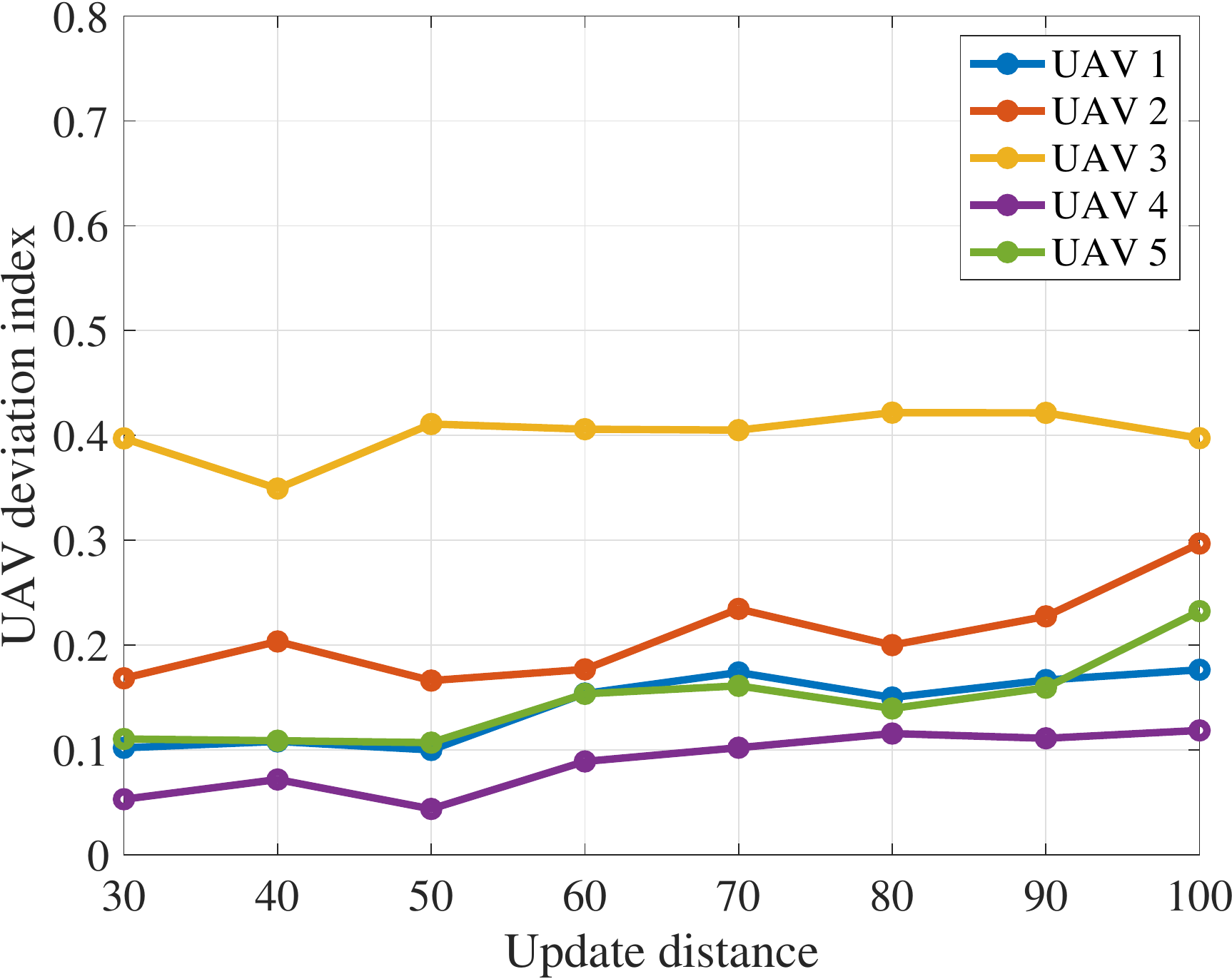}}
	\caption{UAVs deviation index as a relation in the update distance.}\label{fig:resilience2}
\end{figure}

Next, we study the effect of the update distance, i.e., the distance at which the UAVs apply the defense mechanism, on the UAVs' deviation indices. As the UAVs are traveling using $u^{\textrm{max}}$, the update distance will indicate the frequency of updating the UAVs' locations. In Fig. \ref{fig:resilience2}, we study different update distances on the deviation index. In this case, we set the value of $e_{\textrm{max}}$ to $60$ m. From Fig. \ref{fig:resilience2}, we can see that, when the update distance increases, i.e., the less frequent the UAVs apply the defense mechanism, the more they deviate from their planned routes. For instance, when the update distance is set to $30$ m, the average deviation index of all UAVs is $0.17$ compared to $0.25$ when the update distance is $100$. This happens as the UAVs will travel more towards the attacker's destinations before they update their locations, and, move towards their real destinations. We also note that changing the update distance can change the effect of the UAVs on the players' utilities, and, hence on their actions. For instance, when the update distance is $40$ m, UAV $2$ is attacked more than when the update distance is $30$ m, and, hence, it has a higher deviation index. For the same update values, UAV $3$ has a lower deviation index when the update distance is $40$ m compared to when it is $30$ m.

%

\begin{figure}[t]
	\centerline{\includegraphics[width=7.5cm]{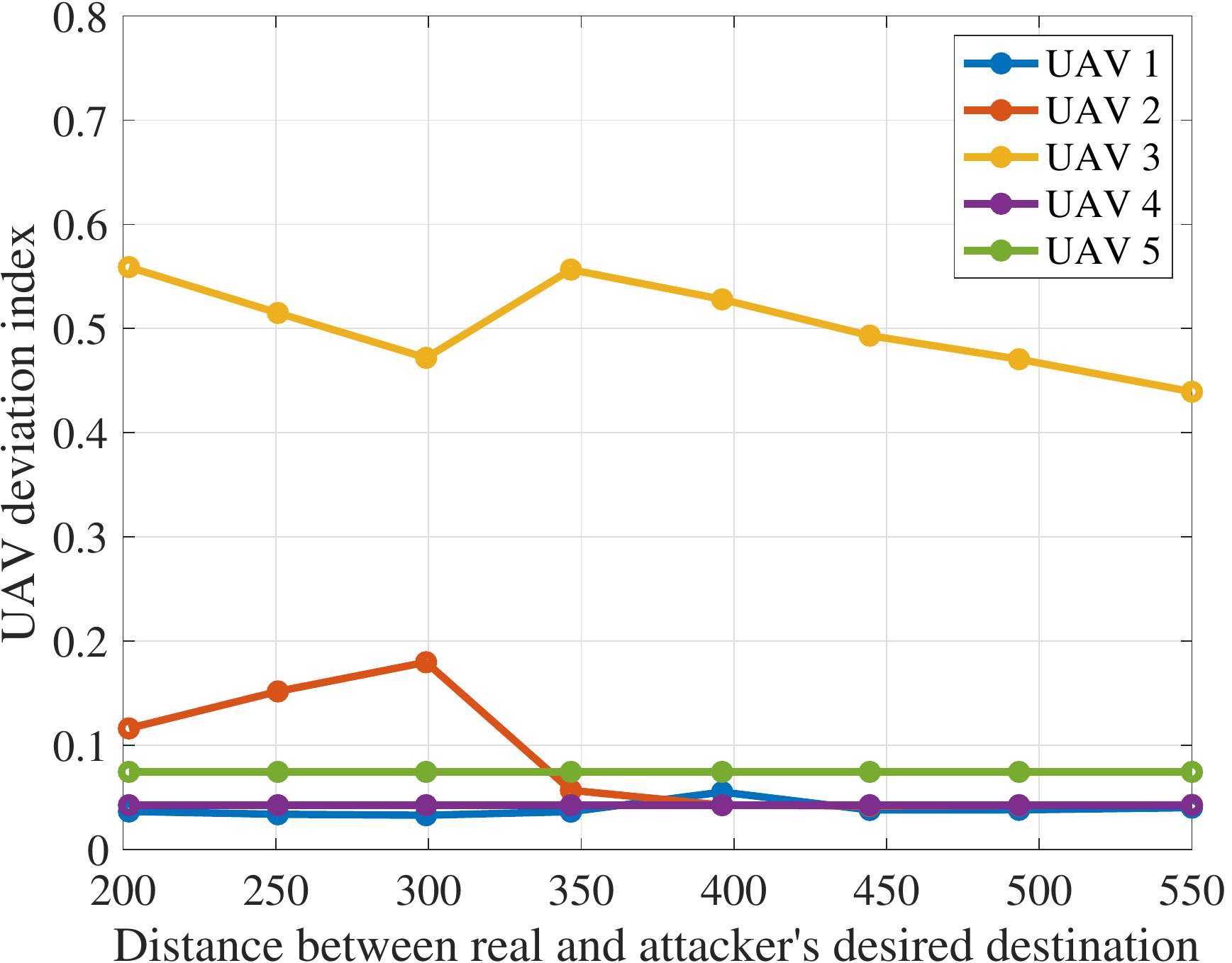}}
	\caption{UAVs deviation index change due to shifting the attacker's desired destinations.}\label{fig:resilience3}
\end{figure}

\begin{figure*}
	\begin{subfigure}{.3\textwidth}
		\centering
		\includegraphics[width=6.2cm]{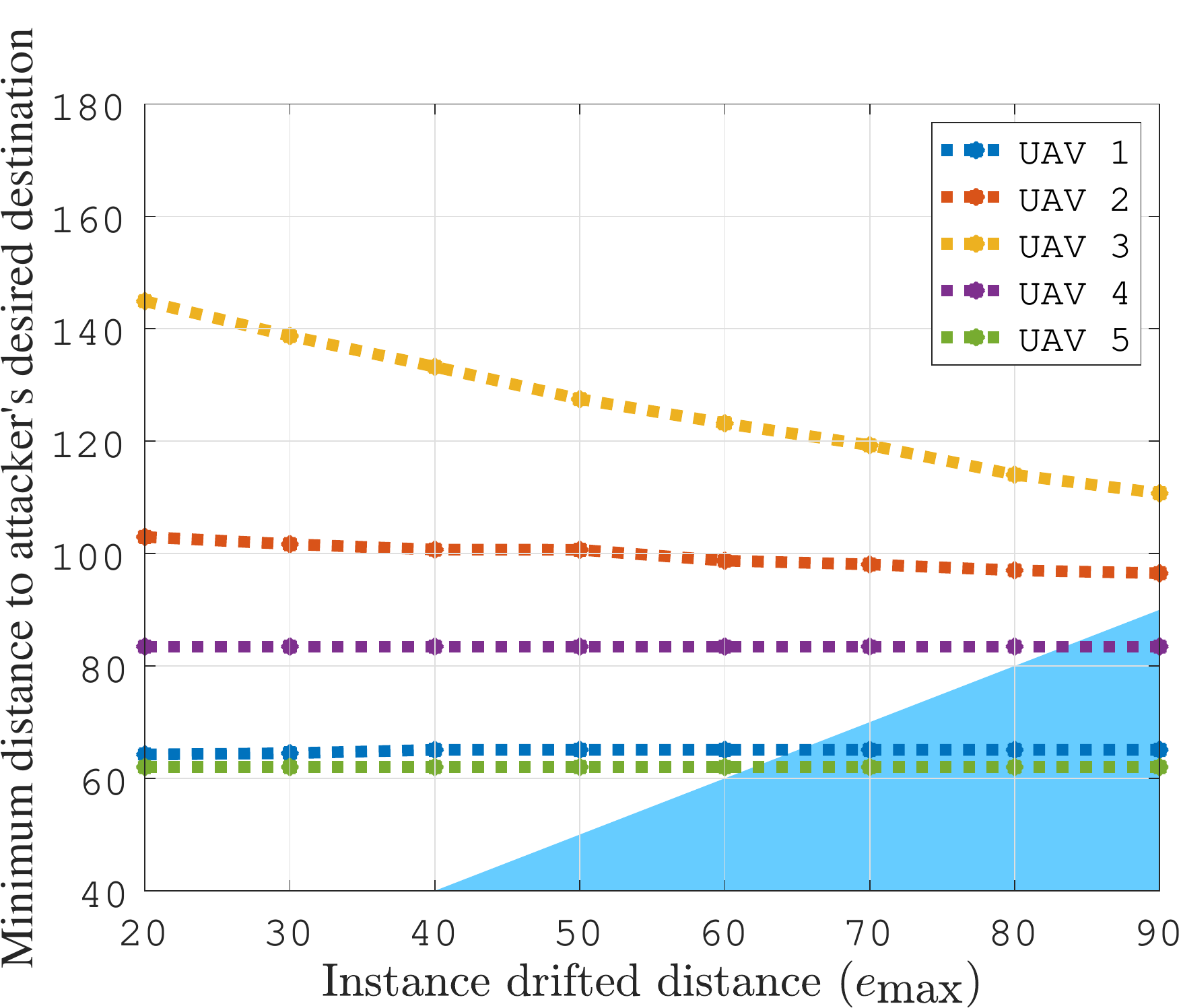}
		\caption{Stackelberg strategies}\label{fig:capture1_S}
	\end{subfigure}\hfill
	\begin{subfigure}{.3\textwidth}
		\centering
		\includegraphics[width=6.2cm]{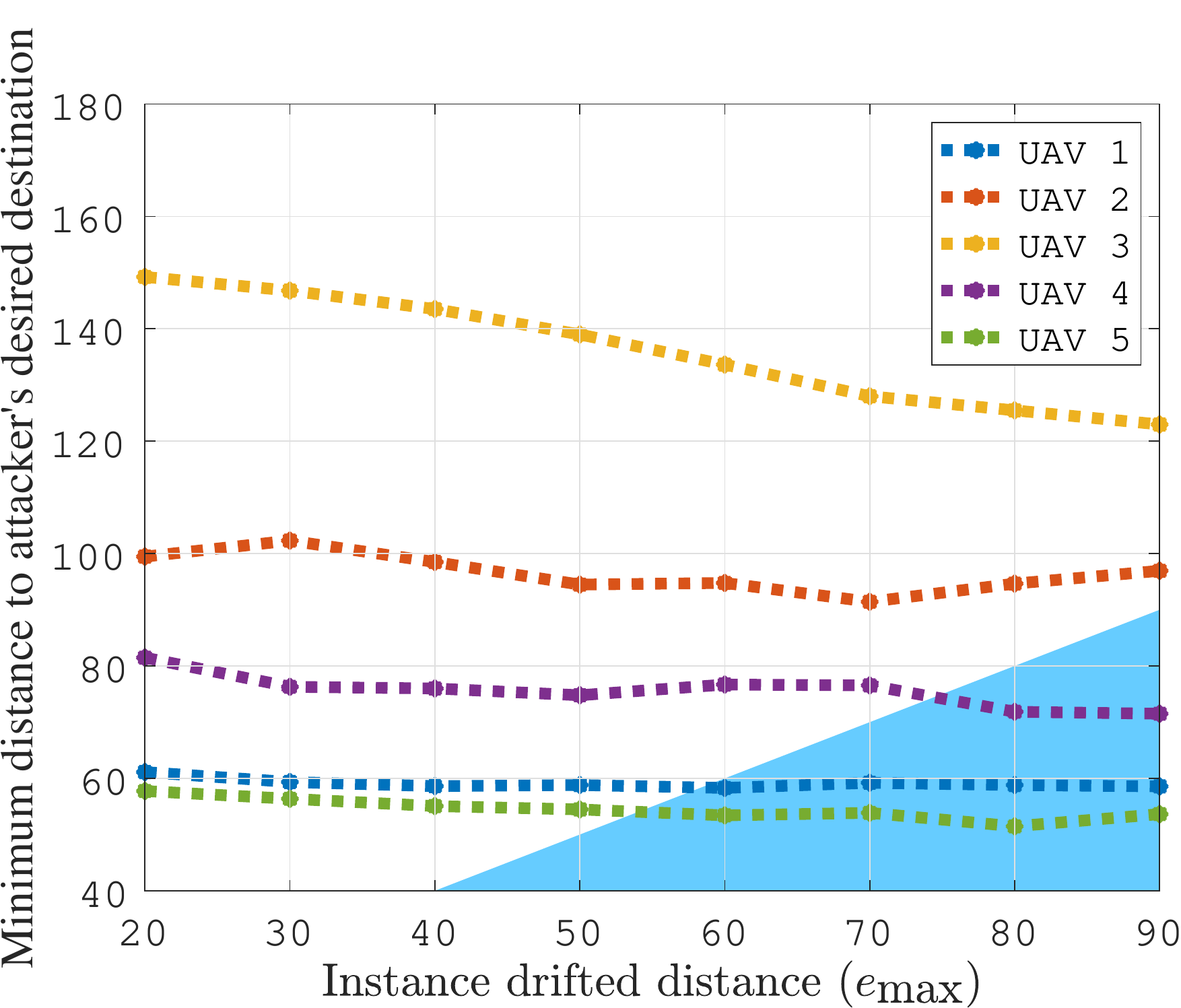}
		\caption{Random strategies}\label{fig:capture1_R}
	\end{subfigure}\hfill
	\begin{subfigure}{.3\textwidth}
		\centering
		\includegraphics[width=6.2cm]{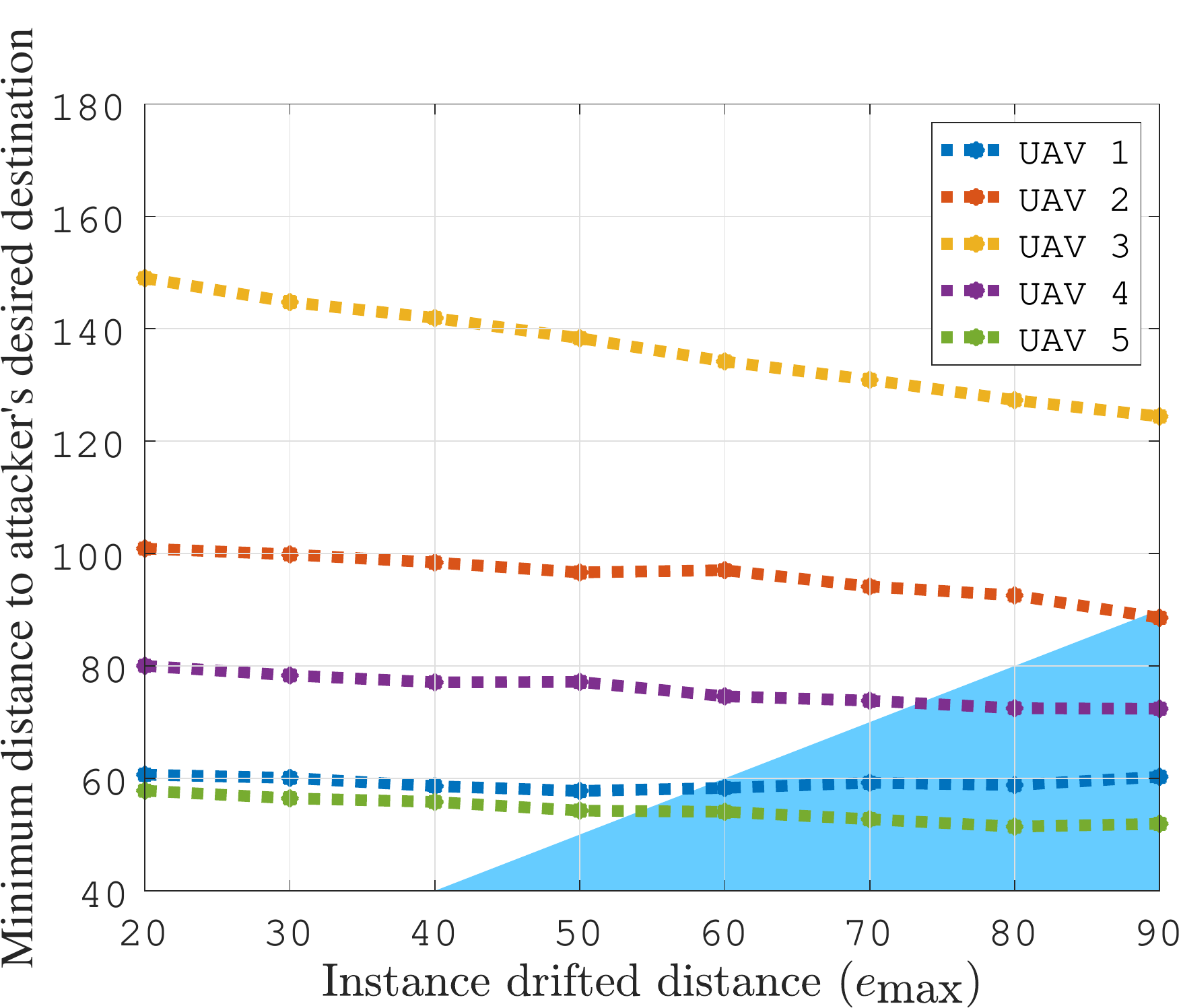}
		\caption{Deterministic strategies}\label{fig:capture1_D}
	\end{subfigure}
	\caption{The effect of the instance drifted distance, $e_{\textrm{max}}$, on the possibility of UAV capture.}\label{fig:capture1}
		\vspace{0.1cm}
\end{figure*}

Next, we study the effect of changing the attacker's desired destinations on the UAVs' deviation indices. In this case, we use the same parameters as in Fig. \ref{fig:visoutput}. Different attacker's desired destinations, $A_i$, $i=1,\dots,5$, are tested by reducing the distance between the attacker's desired destination and the real destinations randomly with an average change of $50$ meters per UAV. In this case, the real destinations are fixed, and the attacker's desired destinations are shifted along the $x$ direction only allowing for deviations to take place along the travel routes. Fig. \ref{fig:resilience3} shows the effect of changing the attacker's desired destinations on the UAVs' deviation indices.
We can see that, as the attacker's desired destinations are closer to the real destinations, the deviation index increases. For instance, when the average distance between the attacker's and the real destinations is $550$ m, the average deviation index for all UAVs is $0.13$  compared to $0.17$ when the average distance drops to $200$ m.
This is because when the distances are smaller, the attacker will have more opportunities (longer paths) to attack the UAVs causing them to deviate more from their planned routes.
Note that, as the distances between the destinations are allows to change randomly, the changes in UAVs' distances to the attacker's desired destinations will not be constant. Hence, the UAVs will contribute differently to the attacker's utilities with each change. This will cause the attacker's actions (attacked UAVs) to be different over the travel routes, with each change. For example, we can see in Fig. \ref{fig:resilience3} that when the distance is $300$ m, UAV $2$ is attacked more than when the distance is $350$ m. On the other hand, UAV $3$ is less attacked when the distance changes from $350$ m to $300$ m.

Finally, we note that, in all the previous cases, while the three studied parameters affect the deviation index of the UAVs, the update distance has the most effect on the UAVs' deviation indices. This is because delaying the update will allow the UAVs to travel on the attacked routes for longer distances before correcting their locations leading to larger deviations. Meanwhile, in the other scenarios, the attacker caused a smaller average deviation on the UAVs because the UAVs update their locations more frequently. These findings corroborate the importance of the proposed defense mechanism and provide important insights for the drone operator to choose suitable update distances according to the available resources.

Next, we study the attacker's possibility to capture any of the UAVs under GPS spoofing attacks.

\subsection{Capturing Possibilities under GPS Spoofing Attacks}

To study the capture possibility of the UAVs, we assume the attacker needs to change a UAV's route, by imposing a different location, in order to capture it. We also assume that the UAV will be captured if it reaches a distance $e_{\textrm{max}}$ from the attacker's desired destination. In the following, we will compare our proposed Stackelberg solution with two other non-game-theoretic baselines referred to as random and deterministic approaches. In the random approach, the defender chooses any UAV randomly to protect at each time step. In the deterministic approach, the defender considers all the UAVs, in order, by choosing one at each time step. In all the three cases, the attacker chooses its strategies in response to the defender's chosen strategies. Note that the non-game-theoretic strategies represent the defender's possibilities in case of incomplete information, i.e., when the defender cannot observe the attacker's actions and, hence cannot apply the game-theoretic approach.

Fig. \ref{fig:capture1} shows the minimum distance that each UAV can reach from its attacker's desired destination, for each value of $e_{\textrm{max}}$. The shaded areas in Fig. \ref{fig:capture1} represent the distances under which the UAV is considered to be captured which correspond to having a distance less than $e_{\textrm{max}}$ to the attacker's desired destination. The configuration parameters in this case are similar to Fig. \ref{fig:resilience1}. We can see in Fig. \ref{fig:capture1_S} that using the Stackelberg strategies, the defender is able to protect all the UAVs until $e_{\textrm{max}}=60$~m. When $e_{\textrm{max}}=70$~m and $e_{\textrm{max}}=80$~m, both UAVs $1$ and $5$ can be captured by the attacker and when $e_{\textrm{max}}=90$~m, UAV $4$ can also be captured. In Fig. \ref{fig:capture1_R}, when the drone operator uses random strategies, we can see that the attacker is able to capture UAVs $1$ and $5$ starting from $e_{\textrm{max}}=60$~m. Moreover, UAV $4$ can be captured starting from $e_{\textrm{max}}=80$~m. Under deterministic strategies, Fig. \ref{fig:capture1_D}, the attacker is also able to start capturing UAVs $1$ and $5$ starting from $e_{\textrm{max}}=60$~m. It will be also able to capture UAV $4$ starting from $e_{\textrm{max}}=80$~m and to capture UAV $2$ at $e_{\textrm{max}}=90$~m.

It is clear from Fig. \ref{fig:capture1} that following the Stackelberg strategies will help the defender to protect more UAVs, particularly for higher values of $e_{\textrm{max}}$. This is due to the fact that, under Stackelberg strategies, the drone operator considers its utilities based on the attacker's response strategies which allows it to mitigate the effect of the attacker's expected actions. However, the other approaches can still help the defender to utilize the defense mechanism and protect the UAVs to some extent, when the defender cannot observe the attacker's actions and use the game-theoretic approach. Fig. \ref{fig:capture1} also shows that UAV $3$ has the most changes to its minimum distance from the attacker's desired destination, when $e_{\textrm{max}}$ increases. This corroborates the result of Fig. \ref{fig:resilience1} whereby UAV $3$ was the most affected by the spoofer's imposed locations, in terms of deviation from its planned route. However, UAV $3$ remains far enough from being captured as the defender's actions allow it to return to the correct traveling direction.

\begin{figure*}
	\begin{subfigure}{.3\textwidth}
		\centering
		\includegraphics[clip,width=6.2cm]{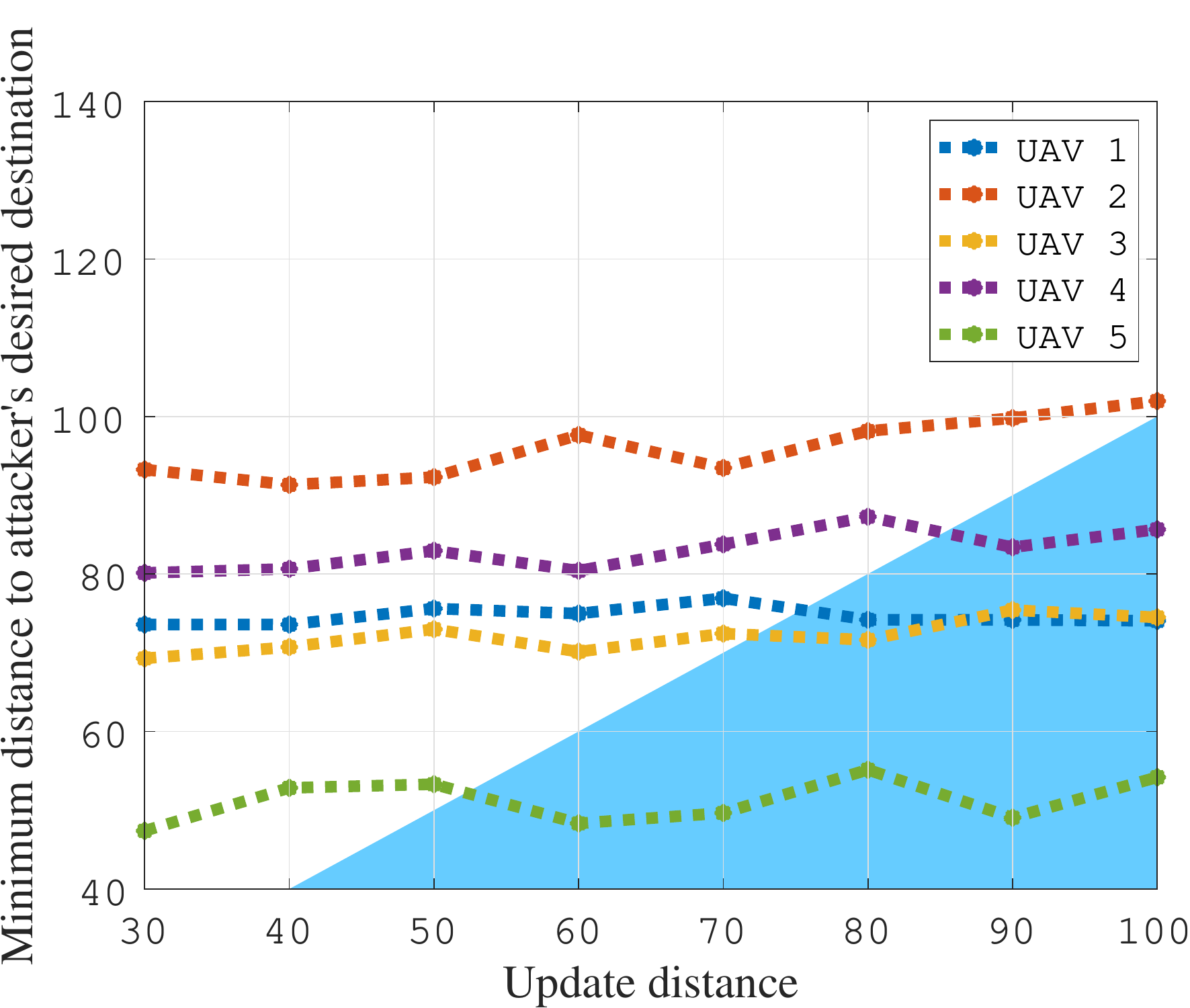}
		\caption{Stackelberg strategies}\label{fig:capture2_S}
	\end{subfigure}\hfill
	\begin{subfigure}{.3\textwidth}
		\centering
		\includegraphics[clip,width=6.2cm]{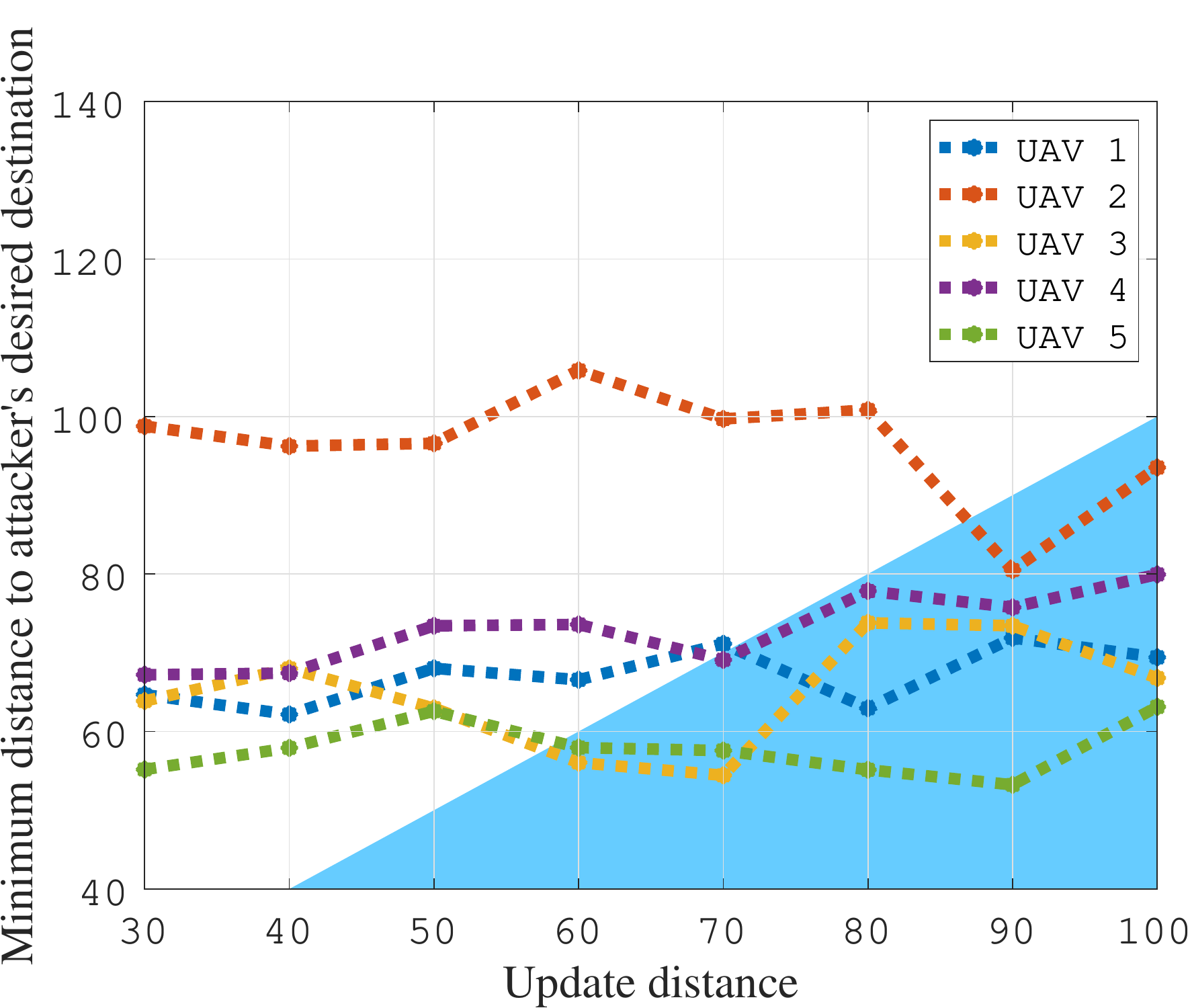}
		\caption{Random strategies}\label{fig:capture2_R}
	\end{subfigure}\hfill
	\begin{subfigure}{.3\textwidth}
		\centering
		\includegraphics[clip,width=6.2cm]{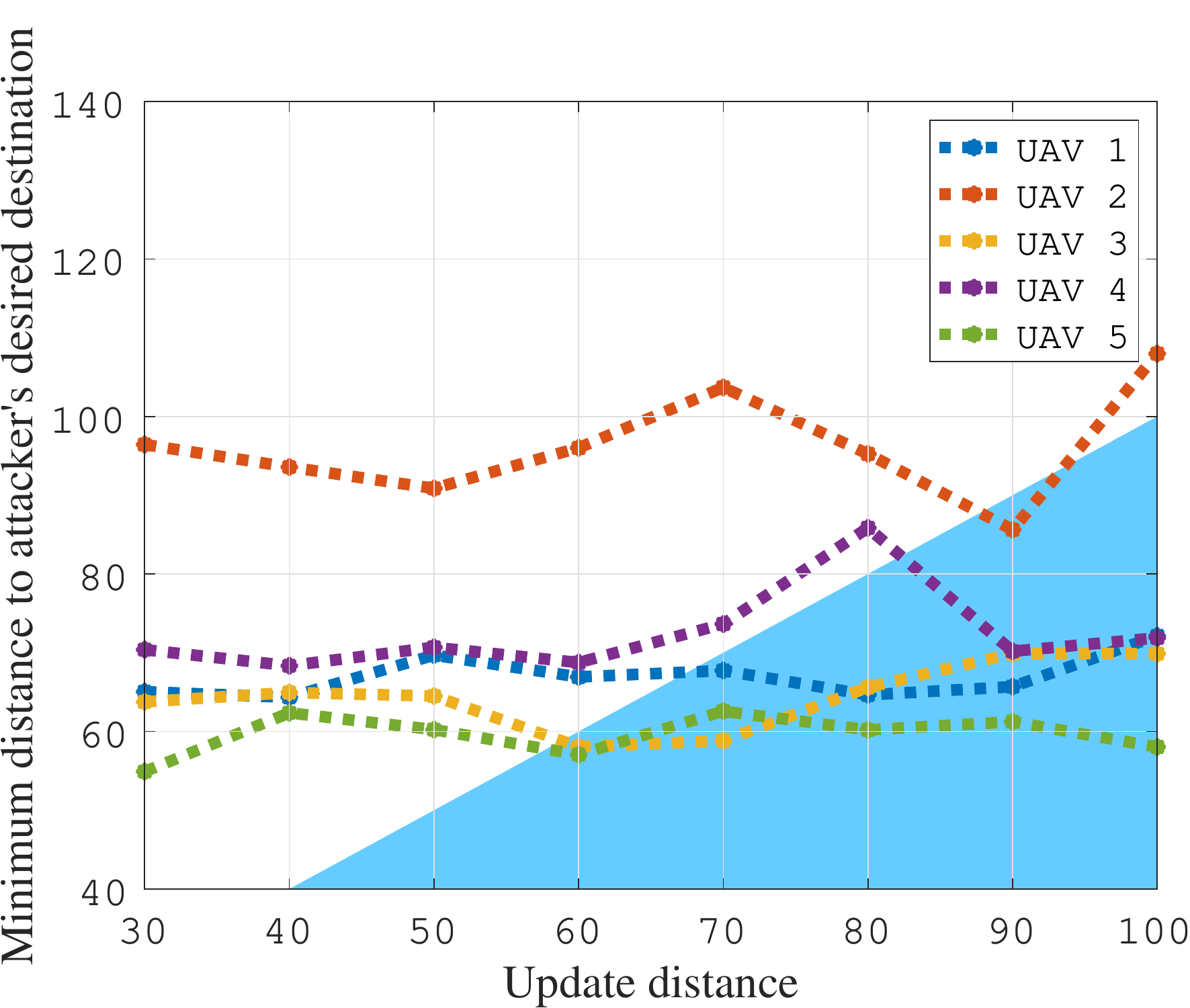}
		\caption{Deterministic strategies}\label{fig:capture2_D}
	\end{subfigure}
	\caption{The effect of update distance on the possibility of UAV capture.}\label{fig:capture2}
		\vspace{0.1cm}
\end{figure*}

Next, we study the effect of changing the UAVs' update distance on the possibility of UAV capture. The configuration parameters in this case are similar to Fig. \ref{fig:resilience2}. Fig. \ref{fig:capture2} shows the effect of the capture possibility with the shaded areas representing the distances under which the UAVs are considered to be captured. We can see in Fig. \ref{fig:capture2_S} that using the Stackelberg strategies, the drone operator is able to protect all the UAVs until an update distance of $50$m. The attacker is able to capture its first UAV, UAV $5$, when the update distance is $60$~m or $70$ m. When the update distance is $80$~m, three UAVs can be captured by the  attacker, and four UAVs can be captured for update distances greater than $80$ m. Under random strategies, Fig. \ref{fig:capture2_R} shows that the attacker is able to capture more UAVs when the update distance is $60$ m as it will capture UAVs $3$ and $5$. Similarly, for all the consequent update distances, more UAVs can be captured compared to the Stackelberg strategies. Under deterministic strategies, Fig. \ref{fig:capture2_D} shows that the attacker will be able to start capturing UAVs $3$ and $5$ at an update distance of $60$~m. When the update distance is $70$ m, the attacker will be able to capture three UAVs compared to one in the Stackelberg strategies. For the consequent update distances, the attacker is able to capture at least the same number of UAVs as the Stackelberg strategies.

Next, we study the effect of changing the attacker's desired destinations on possibility of UAV capture. The configuration parameters in this case are similar to Fig. \ref{fig:resilience3}, with $e_{\textrm{max}}=60$~m. Fig. \ref{fig:capture3} shows this effect on the capture possibility. The shaded areas in Fig. \ref{fig:capture3} represent the capture distances of the UAVs, i.e., distances less than $60$~m. We can see in Fig. \ref{fig:capture3_S} that using the Stackelberg strategies, the drone operator is able to protect all the UAVs when the average distance is $550$ m. When the average distance decreases to $500$ m, the attacker will be able to capture UAV $5$. For any distances less than $400$ m, the attacker will be able to capture UAVs $1$ and $5$. Under random strategies, Fig. \ref{fig:capture3_R}, we can see that the attacker is able to capture UAVs $1$ and $5$ for all the considered distances. The attacker will also be able to capture UAV $4$ under multiple distance settings. Finally, we can see in Fig. \ref{fig:capture3_D}, that the deterministic strategies show very similar response to the random strategies, in terms of the possibility of UAV capture.

Note that, from the three previous scenarios, we can see that the update distance has the most effect on the possibility of UAV capture, similar to its effect of the deviation index. This highlights the importance of choosing this critical parameter when applying the proposed defense mechanism.

\section{Conclusion}\label{Sec:conclusion}
In this paper, we have proposed a novel framework to mitigate the effects of capture attacks via GPS spoofing that target UAVs. Systems dynamics have been used to model the UAVs' optimal routes towards their destinations. To study the effect of a GPS spoofer on these optimal routes, we have mathematically derived the spoofer's optimal imposed locations on any UAV. These locations, when imposed on a UAV, cause the UAVs to deviate from their planned routes and follow new routes towards the spoofer's desired destinations.
We have then proposed a countermeasure defense mechanism to allow UAVs to determine their real locations, after being attacked. This countermeasure is built on the premise of cooperative localization, in which a UAV uses the locations of nearby UAVs to determine its real location. We have, then, defined a Stackelberg game problem to allow the UAVs to better utilize the proposed defense mechanism. In particular, the game is formulated between a GPS spoofer and a drone operator that manages a number of UAVs. The drone operator is considered the leader that determines its strategies first and the spoofer then responds by choosing its strategies. We have mathematically derived the Stackelberg equilibrium strategies, for the formulated game, through a computationally efficient approach. Results have shown that the proposed defense mechanism along with Stackelberg equilibrium strategies outperform other strategy selection techniques in terms of reducing the possibility of UAV capture. We have also tested the effect of different parameters on the UAVs' deviation indices and on the possibility of UAV capture and the results have shown that the UAV update distance has the most effect on these metrics. For future work, we will consider the case of protecting groups of more than five UAVs against multiple simultaneous GPS spoofing attacks.
 
\begin{figure*}
	\begin{subfigure}{.3\textwidth}
		\centering
		\includegraphics[clip,width=6.2cm]{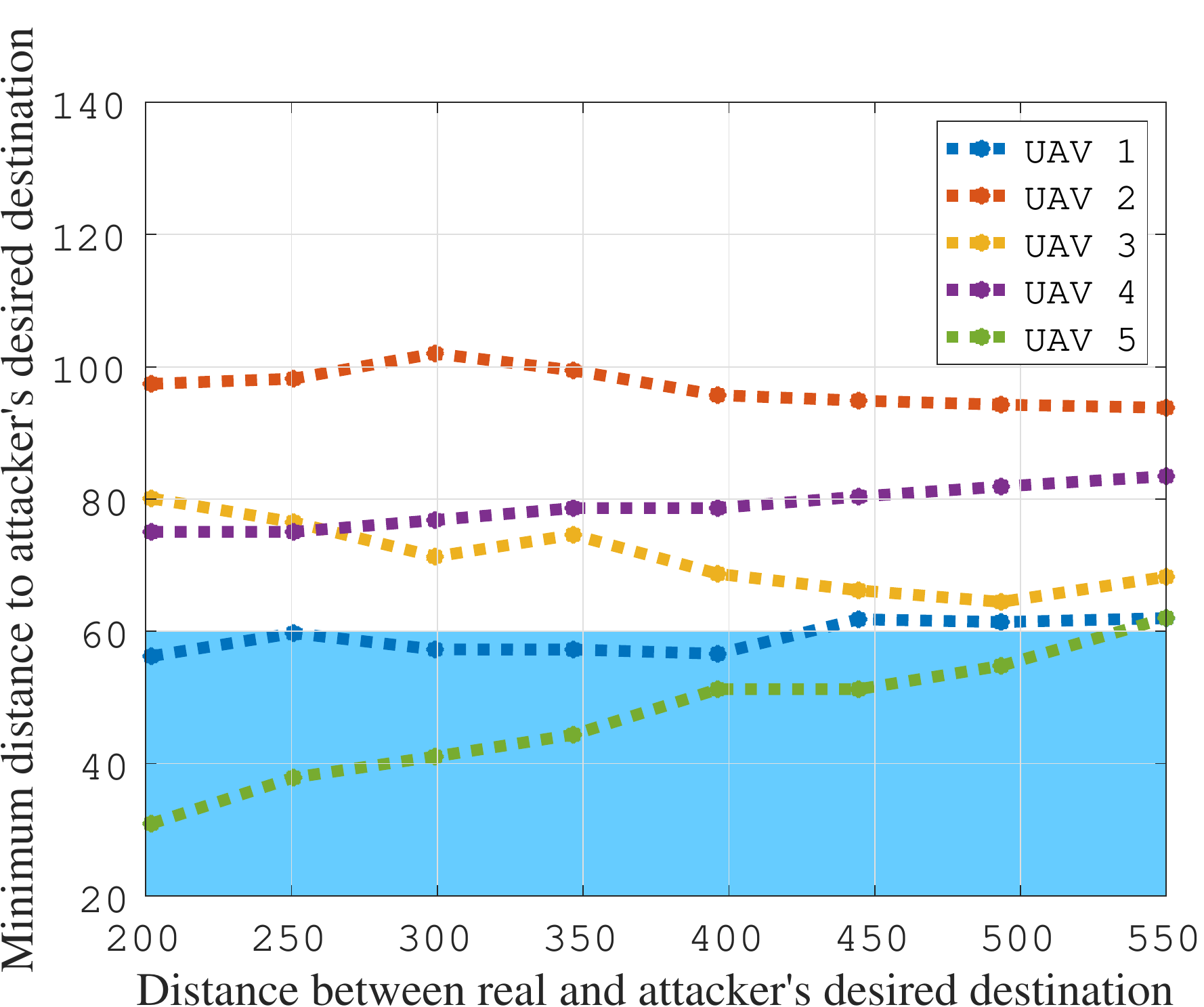}
		\caption{Stackelberg strategies}\label{fig:capture3_S}
	\end{subfigure}\hfill
	\begin{subfigure}{.3\textwidth}
		\centering
		\includegraphics[clip,width=6.2cm]{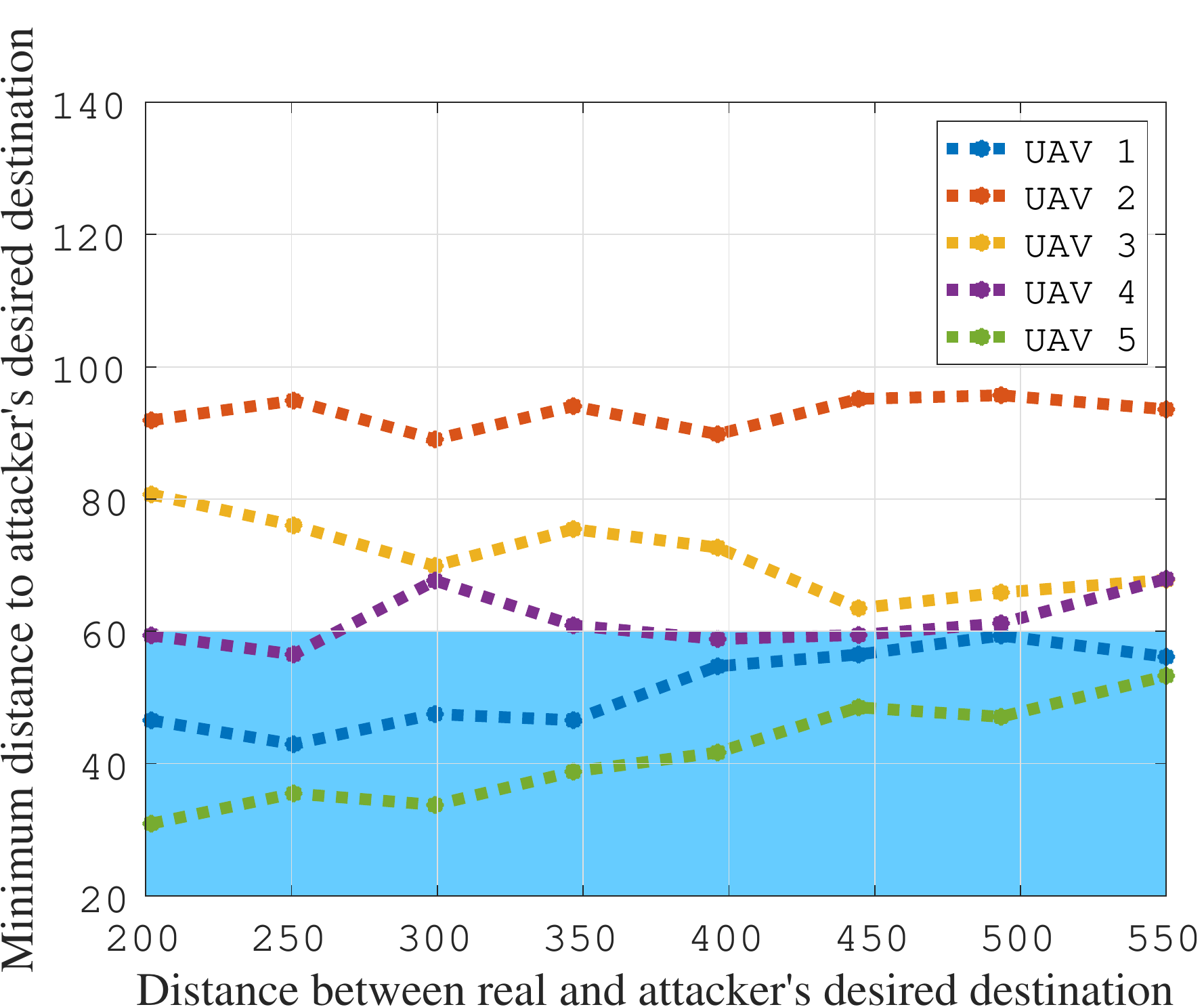}
		\caption{Random strategies}\label{fig:capture3_R}
	\end{subfigure}\hfill
	\begin{subfigure}{.3\textwidth}
		\centering
		\includegraphics[clip,width=6.2cm]{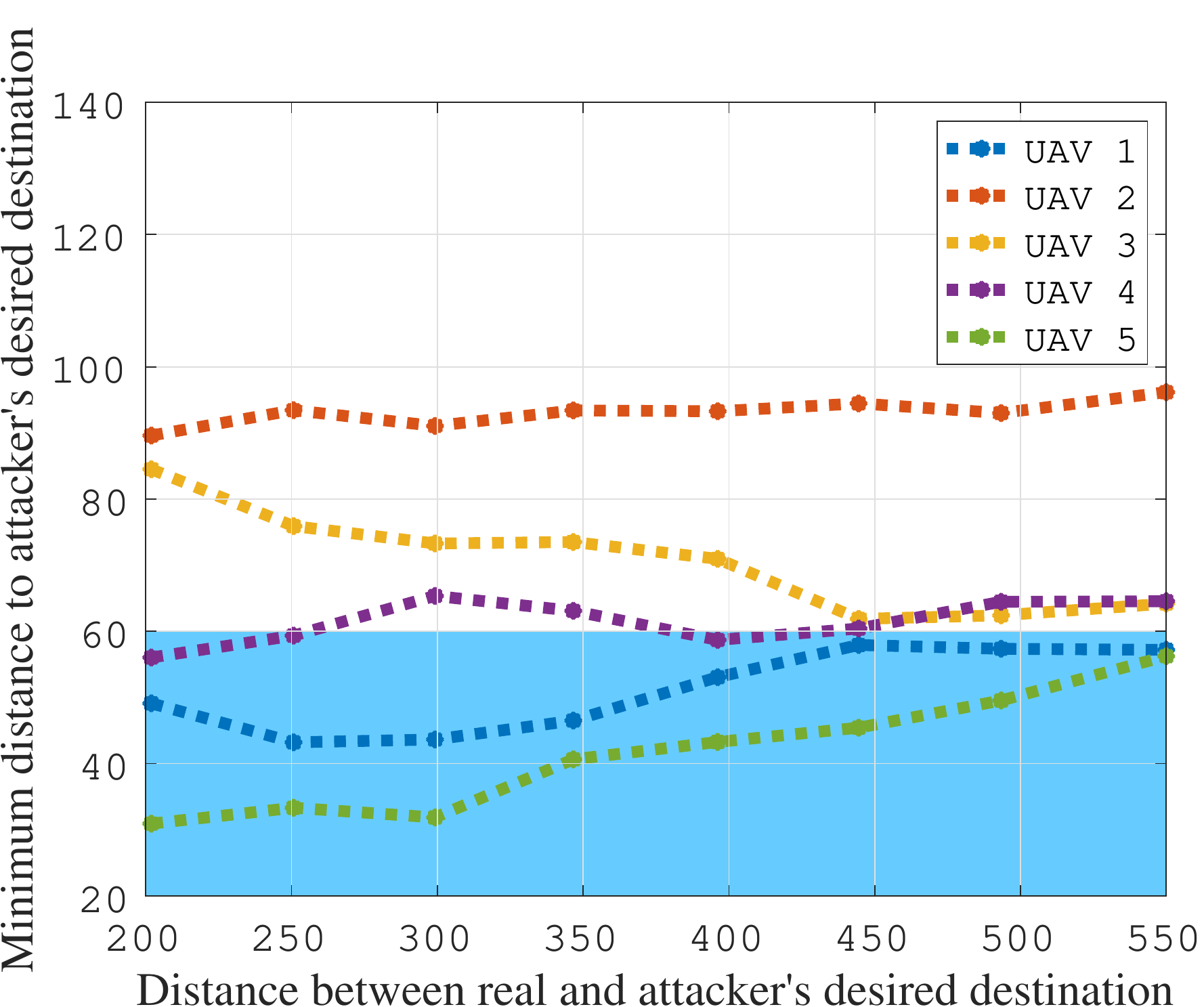}
		\caption{Deterministic strategies}\label{fig:capture3_D}
	\end{subfigure}
	\caption{The effect of the average distance between the real and the attacker's desired destinations on the possibility of UAV capture.}\label{fig:capture3}
\end{figure*}

\begin{appendices}

\section{Proof of Proposition 1}\label{Prop1proof}

\begin{proof} We start by making two assumptions. First, the UAV is considered to reach its destination if it is within a distance $e_{\textrm{max}}$ from its real destination. We also assume that the distance between the UAV's current location and its real destination is greater than $e_{\textrm{max}}$, i.e., the UAV did not reach its destination yet.
	
	According to (\ref{eq:optimalNextLocation}) and (\ref{eq:nextLocationAttack}), the UAV travels towards its real destination from its perceived location. Therefore, in order for the UAV to reach a destination in the opposite side from its real destination, the UAV needs to change its direction in the direction with the longest difference from the UAV's current location. Assume without loss of generality that the difference in the $x$ direction, between the UAV's perceived location and its real destination, is bigger than the difference in the $y$ direction. Then for the UAV, to change its $x$ direction, the value of $u_{x_i}^a(t)$ needs to flips its sign in (\ref{eq:nextLocationAttack}). Comparing the optimal controller, in $x$ direction, in both \eqref{eq:forcecomponentshat} and \eqref{eq:forcecomponents} and assuming they have opposite signs:
	\begin{equation*}
	\cos(\arctan\Big(\frac{y_{d_i} -\hat{y}_i(t)}{x_{d_i} - \hat{x}_i(t)}\Big)) 
	= - \cos(\arctan\Big(\frac{y_{d_i} -y_i(t)}{x_{d_i} - x_i(t)}\Big))
	\end{equation*}
	
	For this condition to hold, $x_{d_i} - \hat{x}_i(t)$ needs to have a different sign from $x_{d_i} - x_i(t)$. i.e., $\hat{x}_i(t)-x_i(t) = (x_{d_i}-x_i(t))+(x_{d_i} - \hat{x}_i(t))$
	
	However, under a covert attack, the imposed location is limited by (\ref{eq:normDistance}), i.e., $|\hat{x}_i(t)-x_i(t)| \le e_{\textrm{max}}|$. Since, it is assumed that the UAV's real location is more than $e_{\textrm{max}}$ away from its real destination, i.e, $x_{d_i}-x_i(t) \ge e_{\textrm{max}}$, then the condition for changing the direction cannot hold. In this case, the attacker cannot impose a location that forces the UAV to change its $x$ direction. Therefore, the attacker's desired destination cannot be in the opposite $x$ direction from the UAV's real destination. 
\end{proof}

\section{Proof of Theorem 2}\label{theorem2proof}
\begin{proof}
	We begin the proof by investigating the solution of the closed-loop dynamic Stackelberg game which is the pair of strategies from (\ref{eq:reactionset}) that satisfy (\ref{eq:StackelbergSolution}). Our proof will show that the same reaction set in (\ref{eq:reactionset}) can be achieved by considering the solution of the static Stackelberg game at each time step. Note that, the reaction set in (\ref{eq:reactionset}) is a combination of the attacker's reactions to every single defender's strategy calculated from (\ref{eq:attackerResponse}). In the following, we will show the solution when $\tau = 2$ and then generalize it to any number of time steps.
	
	When $\tau =2$, the attacker's cost function in (\ref{AttackersAccUtility}) can be written as:
	\begin{align}\label{twoStepCost}
	J^a(\boldsymbol{\beta}^d,\boldsymbol{\beta}^a)&= \sum_{i=1}^{5}  \left\lVert  \boldsymbol{x}^a_{d_i} - \boldsymbol{x}_i(\Delta,z^d_i(1),z^a_i(1)) \right \rVert^2_2 \nonumber\\
	&+  \sum_{i=1}^{5}  \left\lVert  \boldsymbol{x}^a_{d_i} - \boldsymbol{x}_i(2 \Delta,z^d_i(2),z^a_i(2)) \right \rVert^2_2.
	\end{align}
	
	In the dynamic Stackelberg game, the attacker will select a strategy $\boldsymbol{\beta}^a = \{z^a_i(1),z^a_i(2)\}$ in response to every $\boldsymbol{\beta}^d$ that minimizes its utility. We can rewrite (\ref{twoStepCost}) by substituting the values from (\ref{NextLocationActionsSimplified}):
	
	\begin{align} \label{twoStepCostActions}
	J^a(\boldsymbol{\beta}^d,\boldsymbol{\beta}^a)  &= \sum_{i=1}^{5}  \Bigl|\!\Bigl|  \boldsymbol{x}^a_{d_i} - \big( 1-z^a_i(1)+ z^d_i(1) \cdot z^a_i(1)  \big) \cdot \boldsymbol{x}_i(\Delta) \nonumber\\
	&\hspace{1cm}-  \big(z^a_i(1) -z^a_i(1) \cdot z^d_i(1)\big)   \cdot \boldsymbol{x}^a_i(\Delta)  \Bigr|\!\Bigr|^2_2  \nonumber\\
	&+  \sum_{i=1}^{5}  \Bigl|\!\Bigl|  \boldsymbol{x}^a_{d_i} - \big( 1-z^a_i(2)+ z^d_i(2) \cdot z^a_i(2)  \big) \cdot \boldsymbol{x}_i(2\Delta) \nonumber\\
	&\hspace{1cm}-  \big(z^a_i(2) -z^a_i(2) \cdot z^d_i(2)\big) \cdot \boldsymbol{x}^a_i(2\Delta) \Bigr|\!\Bigr|^2_2.
	\end{align}
	
	Now, consider that the defender chooses a specific strategy $\boldsymbol{\beta}^d=\{z^d_j(1)=1,z^d_k(2)=1\}$ where $j,k \in \mathcal{N}$. This means in the first time step $z^d_j(1) =1$ and all the remaining actions will be be zero. Similarly, in the second time step $z^d_k(2) =1$ and all the remaining actions will be be zero. The previous cost function can then be written as:
	\small
	\begin{align} \label{twoStepCostActions2}
	J^a(\boldsymbol{\beta}^d,\boldsymbol{\beta}^a)&   = \sum_{i=1,i\ne j}^{5}  \Bigl|\!\Bigl|  \boldsymbol{x}^a_{d_i} - \big( 1- z^a_i(1) \big) \cdot \boldsymbol{x}_i(\Delta) - z^a_i(1)   \cdot \boldsymbol{x}^a_i(\Delta)  \Bigr|\!\Bigr|^2_2  \nonumber\\
	&\hspace{-.3cm}+  \sum_{i=1,i\ne k}^{5}  \Bigl|\!\Bigl|  \boldsymbol{x}^a_{d_i} - \big( 1- z^a_i(2) \big) \cdot \boldsymbol{x}_i(2\Delta) - z^a_i(2)   \cdot \boldsymbol{x}^a_i(2\Delta)  \Bigr|\!\Bigr|^2_2  \nonumber\\
	&+  \Bigl|\!\Bigl|  \boldsymbol{x}^a_{d_j} - \boldsymbol{x}_j(\Delta) \Bigr|\!\Bigr|^2_2   +  \Bigl|\!\Bigl|  \boldsymbol{x}^a_{d_k} - \boldsymbol{x}_k(2\Delta) \Bigr|\!\Bigr|^2_2.
	\end{align}
	\normalsize
	
	Now consider the attacker's response to $\boldsymbol{\beta}^d=\{z^d_j(1)=1,z^d_k(2)=1\}$.  Let $\boldsymbol{\beta}^a=\{z^a_m(1)=1,Z^a_n(2)=1\}$ where $m,n \in \mathcal{N}$ is the attacker's response that achieves the minimum cost in (\ref{eq:attackerResponse}). Similar to the defender's actions, in this case $z^a_m(1) =1$ and $z^a_n(1) =1$ and all the other attacker's actions will be zero. Now, rewrite the cost in (\ref{twoStepCostActions2}) with respect to $\boldsymbol{\beta}^a=\{z^a_m(1)=1,z^a_n(2)=1\}$:
	
	\small
	\begin{align} \label{twoStepCostActions3}
	J^a(\boldsymbol{\beta}^d,\boldsymbol{\beta}^a)  & = \sum_{i=1,i\ne m,n}^{5}  \sum_{t=1}^{2} \Bigl|\!\Bigl|  \boldsymbol{x}^a_{d_i} - \boldsymbol{x}_i(t\Delta) \Bigr|\!\Bigr|^2_2   \nonumber\\
	&+  \Bigl|\!\Bigl|  \boldsymbol{x}^a_{d_m} - \boldsymbol{x}^a_m(\Delta) \Bigr|\!\Bigr|^2_2   +  \Bigl|\!\Bigl|  \boldsymbol{x}^a_{d_m} - \boldsymbol{x}_m(2\Delta) \Bigr|\!\Bigr|^2_2  \nonumber \\
	& +  \Bigl|\!\Bigl|  \boldsymbol{x}^a_{d_n} - \boldsymbol{x}_n(\Delta) \Bigr|\!\Bigr|^2_2 +   \Bigl|\!\Bigl|  \boldsymbol{x}^a_{d_n} - \boldsymbol{x}^a_n(2\Delta) \Bigr|\!\Bigr|^2_2.
	\end{align}
	\normalsize
	
	As the cost in (\ref{twoStepCostActions3}) represents the minimum cost in response to $\boldsymbol{\beta}^a=\{z^a_m(1)=1,z^a_n(2)=1\}$, the attacker cannot achieve a better cost by changing its strategy. This minimum cost was achieved by attacking UAV $m$, at the first time step without affecting the other UAVs, and attacking UAV $n$, on the second time step, without affecting the other UAVs. This is because the attacker affects only one UAV, at a time step, and the remaining UAVs travel towards their real destinations. Note that, the attacker's choice at the second time step, i.e, UAV $n$ is independent from its choice at the first time step. After the first time step, UAV $n$ reached its real destination, and yet, this was the best for the attacker at the second time step. Since the action at the second time step is independent from the action at the first time step and it depends only on the new UAVs' locations after the first time step, the attacker will have the same reaction set if faced by the defender's actions sequentially instead of the whole strategy.
	
	This finding can be extended to any number of time steps as the attacker's action, at a time step, will affect only one UAV and its actions in the following time steps will be based on the new UAVs' locations whether they were attacked or not at the previous time step. In other words, when faced by a strategy, the attacker cannot achieve a better outcome than responding at each time step independently. Considering this fact, the defender can determine the reaction set in (\ref{eq:reactionset}) sequentially by solving each time step individually. After determining the complete reaction set, the Stackelberg strategies can be achieved from (\ref{eq:StackelbergSolution}).
	
\end{proof}

\end{appendices}

\vspace{-0.5cm}
\def\baselinestretch{1.00}
\bibliographystyle{IEEEtran}
\vspace{-0.01cm}
\bibliography{references}

\begin{thebibliography}{10}
\providecommand{\url}[1]{#1}
\csname url@samestyle\endcsname
\providecommand{\newblock}{\relax}
\providecommand{\bibinfo}[2]{#2}
\providecommand{\BIBentrySTDinterwordspacing}{\spaceskip=0pt\relax}
\providecommand{\BIBentryALTinterwordstretchfactor}{4}
\providecommand{\BIBentryALTinterwordspacing}{\spaceskip=\fontdimen2\font plus
\BIBentryALTinterwordstretchfactor\fontdimen3\font minus
  \fontdimen4\font\relax}
\providecommand{\BIBforeignlanguage}[2]{{%
\expandafter\ifx\csname l@#1\endcsname\relax
\typeout{** WARNING: IEEEtran.bst: No hyphenation pattern has been}%
\typeout{** loaded for the language `#1'. Using the pattern for}%
\typeout{** the default language instead.}%
\else
\language=\csname l@#1\endcsname
\fi
#2}}
\providecommand{\BIBdecl}{\relax}
\BIBdecl

\bibitem{mozaffari2016unmanned}
M.~Mozaffari, W.~Saad, M.~Bennis, and M.~Debbah, ``Unmanned aerial vehicle with
  underlaid device-to-device communications: Performance and tradeoffs,''
  \emph{IEEE Transactions on Wireless Communications}, vol.~15, no.~6, pp.
  3949--3963, Jun. 2016.

\bibitem{rahmati2019dynamic}
A.~Rahmati, X.~He, I.~Guvenc, and H.~Dai, ``Dynamic mobility-aware interference
  avoidance for aerial base stations in cognitive radio networks,'' \emph{arXiv
  preprint arXiv:1901.02613}, Jan. 2019.

\bibitem{zhang2018downlink}
Z.~Zhang, L.~Li, W.~Liang, X.~Li, A.~Gao, W.~Chen, and Z.~Han, ``Downlink
  interference management in dense drone small cells networks using mean-field
  game theory,'' in \emph{Proceedings of the 10th International Conference on
  Wireless Communications and Signal Processing (WCSP)}, Hangzhou, China, Oct.
  2018, pp. 1--6.

\bibitem{hodgson2016precision}
J.~C. Hodgson, S.~M. Baylis, R.~Mott, A.~Herrod, and R.~H. Clarke, ``Precision
  wildlife monitoring using unmanned aerial vehicles,'' \emph{Scientific
  reports}, vol.~6, p. 22574, Mar. 2016.

\bibitem{eldosouky2017resilient}
A.~Eldosouky, W.~Saad, and N.~Mandayam, ``Resilient critical infrastructure:
  Bayesian network analysis and contract-based optimization,'' \emph{arXiv
  preprint arXiv:1709.00303}, Aug. 2017.

\bibitem{chen2017caching}
M.~Chen, M.~Mozaffari, W.~Saad, C.~Yin, M.~Debbah, and C.~S. Hong, ``Caching in
  the sky: Proactive deployment of cache-enabled unmanned aerial vehicles for
  optimized quality-of-experience,'' \emph{IEEE Journal on Selected Areas in
  Communications (JSAC), Special Issue on Human-In-The-Loop Mobile Networks},
  vol.~35, no.~5, pp. 1046--1061, May 2017.

\bibitem{mozaffari2017mobile}
M.~Mozaffari, W.~Saad, M.~Bennis, and M.~Debbah, ``Mobile unmanned aerial
  vehicles \relax{(UAVs)} for energy-efficient internet of things
  communications,'' \emph{IEEE Transactions on Wireless Communications},
  vol.~16, no.~11, pp. 7574--7589, November 2017.

\bibitem{french2018environment}
A.~French, M.~Mozaffari, A.~Eldosouky, and W.~Saad, ``Environment-aware
  deployment of wireless drones base stations with \relax{G}oogle \relax{E}arth
  simulator,'' in \emph{Proceedings of UNAGI'19 - Workshop on UNmanned aerial
  vehicle Applications in the Smart City}, Kyoto,Japan, Mar 2019.

\bibitem{mozaffari2019beyond}
M.~Mozaffari, A.~T.~Z. Kasgari, W.~Saad, M.~Bennis, and M.~Debbah, ``Beyond
  {5G} with {UAV}s: Foundations of a {3D} wireless cellular network,''
  \emph{IEEE Transactions on Wireless Communications}, vol.~18, no.~1, pp.
  357--372, Jan. 2019.

\bibitem{amer2018caching}
R.~Amer, W.~Saad, H.~ElSawy, M.~Butt, and N.~Marchetti, ``Caching to the sky:
  Performance analysis of cache-assisted comp for cellular-connected {UAV}s,''
  in \emph{Proceedings of the IEEE Wireless Communications and Networking
  Conference (WCNC)}, Marrakech, Morocco, April 2019.

\bibitem{altawy2017security}
R.~Altawy and A.~M. Youssef, ``Security, privacy, and safety aspects of
  civilian drones: A survey,'' \emph{ACM Transactions on Cyber-Physical
  Systems}, vol.~1, no.~2, p.~7, Feb. 2017.

\bibitem{challita2019machine}
U.~Challita, A.~Ferdowsi, M.~Chen, and W.~Saad, ``Machine learning for wireless
  connectivity and security of cellular-connected {UAVs},'' \emph{IEEE Wireless
  Communications}, vol.~26, no.~1, pp. 28--35, Feb. 2019.

\bibitem{sanjab2019game}
A.~Sanjab, W.~Saad, and T.~Ba{\c{s}}ar, ``A game of drones: Cyber-physical
  security of time-critical {UAV} applications with cumulative prospect theory
  perceptions and valuations,'' \emph{arXiv preprint arXiv:1902.03506}, Feb.
  2019.

\bibitem{shepard2012evaluation}
D.~P. Shepard, J.~A. Bhatti, T.~E. Humphreys, and A.~A. Fansler, ``Evaluation
  of smart grid and civilian {UAV} vulnerability to {GPS} spoofing attacks,''
  in \emph{Radionavigation Laboratory Conference Proceedings}, 2012.

\bibitem{kerns2014unmanned}
A.~J. Kerns, D.~P. Shepard, J.~A. Bhatti, and T.~E. Humphreys, ``Unmanned
  aircraft capture and control via {GPS} spoofing,'' \emph{Journal of Field
  Robotics}, vol.~31, no.~4, pp. 617--636, Apr. 2014.

\bibitem{warner2003gps}
J.~S. Warner and R.~G. Johnston, ``Gps spoofing countermeasures,''
  \emph{Homeland Security Journal}, vol.~25, no.~2, pp. 19--27, Dec. 2003.

\bibitem{iqbal2008legal}
M.~U. Iqbal and S.~Lim, ``Legal and ethical implications of {GPS}
  vulnerabilities,'' \emph{J. Int'l Com. L. \& Tech.}, vol.~3, p. 178, 2008.

\bibitem{o2012real}
B.~W. O’Hanlon, M.~L. Psiaki, T.~E. Humphreys, and J.~A. Bhatti, ``Real-time
  spoofing detection using correlation between two civil {GPS} receiver,'' in
  \emph{Proceedings of the ION GNSS Meeting}, Nashville, TN, USA, Sep. 2012.

\bibitem{schmidt2016survey}
D.~Schmidt, K.~Radke, S.~Camtepe, E.~Foo, and M.~Ren, ``A survey and analysis
  of the {GNSS} spoofing threat and countermeasures,'' \emph{ACM Computing
  Surveys (CSUR)}, vol.~48, no.~4, p.~64, May 2016.

\bibitem{jansen2018crowd}
K.~Jansen, M.~Sch{\"a}fer, D.~Moser, V.~Lenders, C.~P{\"o}pper, and J.~Schmitt,
  ``{Crowd-GPS-Sec}: Leveraging crowdsourcing to detect and localize {GPS}
  spoofing attacks,'' in \emph{IEEE Symposium on Security and Privacy (SP)},
  San Francisco, CA, May 2018, pp. 1018--1031.

\bibitem{akos2012s}
D.~M. Akos, ``Who's afraid of the spoofer? {GPS/GNSS} spoofing detection via
  automatic gain control (agc),'' \emph{Navigation: Journal of the Institute of
  Navigation}, vol.~59, no.~4, pp. 281--290, Oct. 2012.

\bibitem{montgomery2011receiver}
P.~Y. Montgomery, ``Receiver-autonomous spoofing detection: Experimental
  results of a multi-antenna receiver defense against a portable civil gps
  spoofer,'' in \emph{Radionavigation Laboratory Conference Proceedings}, 2011.

\bibitem{jansen2016multi}
K.~Jansen, N.~O. Tippenhauer, and C.~P{\"o}pper, ``Multi-receiver gps spoofing
  detection: error models and realization,'' in \emph{Proceedings of the 32nd
  Annual Conference on Computer Security Applications}.\hskip 1em plus 0.5em
  minus 0.4em\relax ACM, 2016, pp. 237--250.

\bibitem{heng2015gps}
L.~Heng, D.~B. Work, and G.~X. Gao, ``Gps signal authentication from
  cooperative peers,'' \emph{IEEE Transactions on Intelligent Transportation
  Systems}, vol.~16, no.~4, pp. 1794--1805, 2015.

\bibitem{ferdowsi2018robust}
A.~Ferdowsi, U.~Challita, W.~Saad, and N.~B. Mandayam, ``Robust deep
  reinforcement learning for security and safety in autonomous vehicle
  systems,'' in \emph{Proceedings of the 21st International Conference on
  Intelligent Transportation Systems (ITSC)}, Maui, Hawaii, USA, Nov. 2018, pp.
  307--312.

\bibitem{qu2011cooperative}
Y.~Qu and Y.~Zhang, ``Cooperative localization against {GPS} signal loss in
  multiple {UAV}s flight,'' \emph{Journal of Systems Engineering and
  Electronics}, vol.~22, no.~1, pp. 103--112, Mar. 2011.

\bibitem{eldosouky2016single}
A.~Eldosouky, W.~Saad, and D.~Niyato, ``Single controller stochastic games for
  optimized moving target defense,'' in \emph{Proceedings of IEEE International
  Conference on Communications (ICC)}, Kuala Lumpur, Malaysia, May 2016, pp.
  1--6.

\bibitem{su2016stealthy}
J.~Su, J.~He, P.~Cheng, and J.~Chen, ``A stealthy {GPS} spoofing strategy for
  manipulating the trajectory of an unmanned aerial vehicle,''
  \emph{IFAC-PapersOnLine}, vol.~49, no.~22, pp. 291--296, Sep. 2016.

\bibitem{ferdowsi2018cyber}
A.~Ferdowsi, S.~Ali, W.~Saad, and N.~B. Mandayam, ``Cyber-physical security and
  safety of autonomous connected vehicles: Optimal control meets multi-armed
  bandit learning,'' \emph{arXiv preprint arXiv:1812.05298}, Dec. 2018.

\bibitem{zeng2017practical}
K.~C. Zeng, Y.~Shu, S.~Liu, Y.~Dou, and Y.~Yang, ``A practical {GPS} location
  spoofing attack in road navigation scenario,'' in \emph{Proceedings of the
  18th International Workshop on Mobile Computing Systems and
  Applications}.\hskip 1em plus 0.5em minus 0.4em\relax Sonoma, CA, USA: ACM,
  Feb. 2017, pp. 85--90.

\bibitem{coxeter1969introduction}
H.~S.~M. Coxeter, H.~S.~M. Coxeter, H.~S.~M. Coxeter, and H.~S.~M. Coxeter,
  \emph{Introduction to geometry}.\hskip 1em plus 0.5em minus 0.4em\relax Wiley
  New York, 1969, vol. 136.

\bibitem{sinha2018review}
A.~Sinha, P.~Malo, and K.~Deb, ``A review on bilevel optimization: from
  classical to evolutionary approaches and applications,'' \emph{IEEE
  Transactions on Evolutionary Computation}, vol.~22, no.~2, pp. 276--295, Apr.
  2018.

\bibitem{han2012game}
Z.~Han, D.~Niyato, W.~Saad, T.~Ba{\c{s}}ar, and A.~Hj{\o}rungnes, \emph{Game
  theory in wireless and communication networks: theory, models, and
  applications}.\hskip 1em plus 0.5em minus 0.4em\relax Cambridge University
  Press, 2012.

\bibitem{simaan1973additional}
M.~Simaan and J.~B. Cruz, ``Additional aspects of the stackelberg strategy in
  nonzero-sum games,'' \emph{Journal of Optimization Theory and Applications},
  vol.~11, no.~6, pp. 613--626, Dec. 1973.

\bibitem{basar1999dynamic}
T.~Ba{\c{s}}ar and G.~J. Olsder, \emph{Dynamic noncooperative game
  theory}.\hskip 1em plus 0.5em minus 0.4em\relax Siam, 1999, vol.~23.

\end{thebibliography}
\vspace{-0.5cm}
\end{document}